\documentclass[12pt]{article}

\usepackage{graphicx,eepic,epic,fullpage}
\usepackage[colorlinks]{hyperref}
\usepackage[header,title]{appendix}
\usepackage{latexsym}
\usepackage{amssymb,amsthm}
\usepackage[reqno,leqno]{amsmath} 
\usepackage{bbm}
\usepackage{euscript}
\usepackage{subfigure}
\usepackage{color}
\usepackage{midfloat}
\usepackage{wrapfig}
\usepackage{mdwlist}
\setlength{\marginparwidth}{3cm}

\newtheorem{definition}{Definition}
\newtheorem{lemma}{Lemma}
\newtheorem{proposition}{Proposition}
\newtheorem{theorem}{Theorem}
\newtheorem{corollary}{Corollary}
\newtheorem*{lema}{Lemma~\ref{lemma.main}}
\newtheorem*{prop1}{Proposition~\ref{prop.random_sparsity}}
\newcommand{\cA}{\mathcal{A}}
\newcommand{\R}{\mathbb{R}}
\newcommand{\ind}{\mathbf{1}}

\newcommand{\vertiii}[1]{{\left\vert\kern-0.25ex\left\vert\kern-0.25ex\left\vert #1 
    \right\vert\kern-0.25ex\right\vert\kern-0.25ex\right\vert}}
    \makeatother

\title{Sketching Sparse Matrices}
\author{Gautam Dasarathy, Parikshit Shah, Badri Narayan Bhaskar, and Rob Nowak\\ University of Wisconsin - Madison}
\date{March 26, 2013}

\begin{document}
\maketitle

\begin{abstract}
{This paper considers the problem of recovering an unknown sparse $p\times p$ matrix $X$ from an $m\times m$ matrix $Y=AXB^T$, where
$A$ and $B$ are known $m \times p$ matrices with $m\ll p$.  

The main result shows that there exist constructions of the ``sketching'' matrices $A$ and $B$ so that even if $X$ has $\mathcal{O}(p)$ non-zeros, it can be recovered exactly and efficiently using a convex program as long as these non-zeros are not concentrated in any single row/column of $X$. Furthermore, it suffices for the size of $Y$ (the sketch dimension) to scale as $m = \mathcal{O}\left(\sqrt{\mbox{ \# nonzeros in }X} \times\log p\right)$. The results also show that the recovery is robust and stable in the sense that if $X$ is equal to a sparse matrix plus a perturbation, then the convex program we propose produces an approximation with accuracy proportional to the size of the perturbation. Unlike traditional results on sparse recovery, where the sensing matrix produces independent measurements, our sensing operator is highly constrained (it assumes a tensor product structure). Therefore, proving recovery guarantees require non-standard techniques. Indeed our approach relies on a novel result concerning tensor products of bipartite graphs, which may be of independent interest. 

This problem is motivated by the following application, among others.  Consider a $p \times n$ data matrix $D$, consisting of $n$ observations of $p$  variables.  Assume that the correlation matrix $X:=DD^{T}$ is (approximately) sparse in the sense that each of the $p$ variables is significantly correlated with only a few others.  Our results show that these significant correlations can be detected even if we have access to only a sketch of the data  $S=AD$ with $A \in R^{m\times p}$.}
\end{abstract}
\noindent{\bf Keywords. }sketching, tensor products, distributed sparsity,  $\ell_1$ minimization, compressed sensing, covariance sketching, graph sketching, multi-dimensional signal processing.
\section{Introduction}
\label{sec.introduction}
An important feature of many modern data analysis problems is the presence of  a large number of variables relative to the amount of available resources. Such high dimensionality occurs in a range of applications in bioinformatics, climate studies, and economics. Accordingly, a fruitful and active research agenda over the last few years has been the development of methods for sampling, estimation, and learning that take into account \emph{structure} in the underlying model and thereby making these problems tractable. A notion of structure that has seen many applications is that of \emph{sparsity}, and methods for sampling and estimating sparse signals have been the subject of intense research in the past few years \cite{candes2005decoding,candes2006robust,donoho2006compressed}

In this paper we will study a more nuanced notion of structure which we call  \emph{distributed sparsity}. For what follows, it will be convenient to think of the unknown high-dimensional signal of interest as being represented as a matrix $X$. Roughly, the signal is said to be distributed sparse if every row and every column of $X$ has only a few non-zeros. We will see that it is possible to design efficient and effective acquisition and estimation mechanisms for such signals. Let us begin by considering a few example scenarios where one might encounter distributed sparsity.

\begin{itemize}
\item \textbf{Covariance Matrices:} Covariance matrices associated to some natural phenomena have the property that each covariate is correlated with only a few other covariates. For instance, it is observed that protein signaling networks are such that there are only a few significant correlations \cite{sachs2005causal} and hence the discovery of the such networks from experimental data naturally leads to the estimation of a covariance matrix (where the covariates are proteins) which is (approximately) distributed sparse. Similarly, the covariance structure corresponding to longitudinal data is distributed sparse \cite{diggle1998nonparametric}. See Section~\ref{sec.covarianceSketching}.
\item \textbf{Multi-dimensional signals:} Multi-dimensional signals such as the natural images that arise in medical imaging \cite{candes2005decoding} are known to be sparse in the gradient domain. When the features in the images are not axis-aligned, not only is the matrix representation of the image gradient sparse, it is also distributed sparse. For a little more on this, see Section~\ref{sec.multiDimensionalSignalProcessing}
\item \textbf{Random Sparse Signals and Random Graphs:}
Signals where the sparsity pattern is \emph{random}  (i.e., each entry is nonzero independently and with a probability $q$) are also distributed sparse with high probability. The ``distributedness'' of the  sparsity pattern can be measured using the ``degree of sparsity'' $d$ which is defined to be the maximum number of non-zeros in any row or column. For random sparsity patterns, we have the following:
\begin{proposition} \label{prop.random_sparsity}
Consider a random matrix $X \in\mathbb{R}^{p \times p}$ whose entries are independent copies of the Bernoulli$(\gamma)$ \footnote{Recall that if $\chi\sim\mbox{ Bernoulli}(\gamma)$, then $P(\chi=1) = \gamma$ and $P(\chi =0) = 1-\gamma$.}distribution where $p\gamma = \Delta= \Theta(1)$. Then for any $\epsilon>0$, $X$ has at most $d$ 1's in each row/column with probability at least $1-\epsilon$, where $$d = \Delta\left(1 + {\frac{2\log(2p/\epsilon)}{\Delta}}\right).$$
\end{proposition}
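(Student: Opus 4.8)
The plan is to reduce the claim to a standard Chernoff tail bound for binomial random variables, applied separately to each row sum and column sum and then combined by a union bound. For a fixed row $i$, let $R_i:=\sum_{j=1}^{p}X_{ij}$ count the $1$'s in that row; since the entries are independent $\mathrm{Bernoulli}(\gamma)$, we have $R_i\sim\mathrm{Binomial}(p,\gamma)$ with mean $p\gamma=\Delta$, and the same holds for every column sum $C_j:=\sum_{i=1}^{p}X_{ij}$. The degree of sparsity of $X$ is $\max\{\max_i R_i,\ \max_j C_j\}$, so if I can show $P(R_1\ge d)\le \epsilon/(2p)$, then a union bound over the $p$ rows and $p$ columns gives $P(\text{some row or column has more than }d\text{ ones})\le 2p\cdot\tfrac{\epsilon}{2p}=\epsilon$, which is exactly the assertion.

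For the tail bound, for any $s>0$ we have $\mathbb{E}[e^{sR_1}]=(1-\gamma+\gamma e^{s})^{p}\le \exp(\Delta(e^{s}-1))$, so Markov's inequality yields $P(R_1\ge d)\le \exp(-sd+\Delta(e^{s}-1))$; minimizing the exponent over $s>0$ (the minimizer $s=\log(d/\Delta)$ is admissible since $d>\Delta$) gives
\[
P(R_1\ge d)\ \le\ \exp\!\Big(-d\log\tfrac{d}{\Delta}+d-\Delta\Big).
\]
Substituting $d=\Delta(1+\delta)$ with $\delta:=\tfrac{2\log(2p/\epsilon)}{\Delta}$, so that $d-\Delta=2\log(2p/\epsilon)$ and $\log(2p/\epsilon)=\tfrac{\Delta\delta}{2}$, the exponent becomes $-\Delta(1+\delta)\log(1+\delta)+2\log(2p/\epsilon)$, and the bound is at most $\epsilon/(2p)=e^{-\log(2p/\epsilon)}$ if and only if $\Delta(1+\delta)\log(1+\delta)\ge 3\log(2p/\epsilon)$, i.e.\ if and only if $(1+\delta)\log(1+\delta)\ge \tfrac{3}{2}\delta$.

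This last scalar inequality is where the hypothesis $\Delta=\Theta(1)$ is used: since $\Delta$ is bounded while $\log(2p/\epsilon)\to\infty$, the relevant regime is $\delta$ large, and for large $\delta$ the left side ($\sim\delta\log\delta$) dominates $\tfrac32\delta$. Concretely, $g(\delta):=(1+\delta)\log(1+\delta)-\tfrac32\delta$ has $g(0)=0$ and $g'(\delta)=\log(1+\delta)-\tfrac12$, so $g$ decreases on $(0,\sqrt e-1)$ and increases afterwards, and a short computation shows $g(\delta)\ge 0$ for all $\delta\ge\delta_0$ with an absolute constant $\delta_0<2$; the hypotheses guarantee $\delta\ge\delta_0$ once $p$ is large. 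Plugging back through the two previous steps completes the proof.

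The argument is essentially routine; the only delicate point is the bookkeeping above, namely checking that the specific constant $2$ in the definition of $d$ makes the Chernoff exponent large enough, which is precisely why $\log(2p/\epsilon)$ must be large relative to $\Delta$ (equivalently, why $\Delta=\Theta(1)$ is assumed). A cruder but cleaner variant that avoids the scalar inequality uses the textbook estimate $P(\mathrm{Binomial}(p,\gamma)\ge R)\le 2^{-R}$ valid whenever $R\ge 6\Delta$ (from $P(\mathrm{Binomial}(p,\gamma)\ge R)\le (e\Delta/R)^{R}$): since $d\ge 6\Delta$ once $\log(2p/\epsilon)\ge\tfrac52\Delta$, one gets $P(R_1\ge d)\le 2^{-d}\le 2^{-2\log(2p/\epsilon)}\le \epsilon/(2p)$, yielding the same conclusion.
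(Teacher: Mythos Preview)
Your proposal is correct and takes essentially the same approach as the paper: bound the tail of a single binomial row/column sum by a multiplicative Chernoff inequality and then union bound over the $2p$ rows and columns. The only cosmetic difference is that the paper invokes the named form $\mathbb{P}(X_1>(1+\beta)\Delta)\le\exp\!\big(-\tfrac{\beta^2\Delta}{2+\beta}\big)$ with $\beta=\tfrac{2\log(2p/\epsilon)}{\Delta}$ and simplifies using $\beta>2$, whereas you derive the (tighter) Chernoff--Poisson tail $\exp(-d\log(d/\Delta)+d-\Delta)$ from the MGF and reduce to the scalar inequality $(1+\delta)\log(1+\delta)\ge\tfrac32\delta$; both routes impose the same qualitative requirement that $\log(2p/\epsilon)$ be large relative to $\Delta$, which is exactly where $\Delta=\Theta(1)$ enters.
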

(The proof is straightforward, and available in Appendix \ref{appendix.B}.)

In a similar vein, combinatorial graphs have small \emph{degree} in a variety of applications, and their corresponding matrix representation will then be distributed sparse. For instance, Erdos-Renyi random graphs $\mathcal{G}(p,q)$ with $pq=\mathcal{O}(\log p)$ have small degree \cite{bollobas2001random}.
\end{itemize}


\subsection{Problem Setup and Main Results}
\label{subsec.setup}

Our goal is to invert an underdetermined linear system of the form \begin{equation}
\label{eq.linearSystem}
Y = AX B^T , 
\end{equation}
where $A= [a_{ij}]\in\mathbb{R}^{m\times p}, B=[b_{ij}]\in\mathbb{R}^{m\times p}$, with $m\ll p$ and $X\in\mathbb{R}^{p\times p}$. Since the matrix $X\in\mathbb{R}^{p\times p}$ is linearly transformed to obtain the smaller dimensional matrix $Y\in\mathbb{R}^{m\times m}$, we will refer to $Y$ as the \emph{sketch} (borrowing terminology from the computer science literature \cite{muthukrishnan2005data}) of $X$ and we will refer to the quantity $m$ as the \emph{sketching dimension}. Since the value of $m$ signifies the amount of compression achieved, it is desirable to have as small a value of $m$ as possible. 

Rewriting the above using tensor product notation, with $y = $ vec$(Y)$ and $x =$ vec$(X)$, we equivalently have
\begin{equation}
y = (B\otimes A)x, 
\end{equation}
where vec$(X)$ simply \emph{vectorizes} the matrix $X$, i.e., produces a long column vector by stacking the columns of the matrix and $B\otimes A$ is the tensor (or Kronecker) product of $B$ and $A$, given by
\begin{equation}
\left[\begin{array}{cccc}
b_{11}A & b_{12}A&\cdots &b_{1p}A\\
b_{21}A& b_{22}A&\cdots &b_{2p}A\\
\vdots&\vdots&\ddots&\vdots\\
b_{m1}A& b_{m2}A&\cdots &b_{mp}A\\
\end{array}
\right].
\end{equation}

While it is not possible to invert such underdetermined systems of equations in general, the rapidly growing literature on what has come to be known as \emph{compressed sensing} suggests that this can be done under certain assumptions. In particular, taking cues from this literature, one might think that this is possible if $x$ (or equivalently $X$) has only a few non-zeros. 

Let us first consider the case when there are only $k = \Theta(1)$ non-zeros in $X$, i.e., it is very sparse. Then, it is possible to prove that the optimization program~\eqref{eq.optP1} recovers $X$ from $AX B^T$ using  standard ``RIP-based'' techniques \cite{candes2005decoding}. We refer the interested reader to the papers by Jokar et al \cite{jokar2009sparse} and Duarte et al \cite{duarte2012kronecker} for more details, but in essence the authors show that if $\delta_r(A)$ and $\delta_r(B)$ are the restricted isometry constants (of order $r$) \cite{candes2005decoding} for $A$ and $B$ respectively, then the following is true about $B\otimes A$
\begin{equation*}
\max\left\{\delta_r(A),\delta_r(B)\right\}\leq \delta_r(A\otimes B) = \delta_r(B\otimes A) \leq \left(1+\delta_r(A)\right)\left(1+\delta_r(B)\right) - 1.
\end{equation*}

In many interesting problems that arise naturally, as we will see in subsequent sections, a more realistic assumption to make is that $X$ has $\mathcal{O}(p)$ non-zeros and it is this setting we consider for this paper.  Unfortunately, the proof techniques outlined above cannot succeed in such a demanding scenario. As hinted earlier, it will turn out however that one cannot handle arbitrary sparsity patterns and that the non-zero pattern of $X$ needs to be \emph{distributed}, i.e., each row/column of $X$ cannot have more than a few, say $d$, non-zeros. We will call such matrices $d-$distributed sparse (see Definition~\ref{def.distributedSparse}). We explore this notion of structure in more detail in Section~\ref{sec.distributedSparisity}.

An obvious, albeit highly impractical, approach to recover a (distributed) sparse $X$ from measurements of the form $Y = AX B^T$ is the following: search over all matrices $\tilde{X}\in\mathbb{R}^{p\times p}$ such that $A\tilde{X}B^T$ agrees with $Y = AX B^T$ and find the sparsest one. One might hope that under reasonable assumptions, such a procedure would return $X$ as the solution. However, there is no guarantee that this approach might work and worse still, such a search procedure is known to be computationally infeasible.

We instead consider solving the optimization program \eqref{eq.optP1} which is a natural (convex) relaxation of the above approach.
\begin{equation}
\begingroup
\begin{aligned}
\underset{\tilde{X}}{\mbox{minimize}}&\qquad\left\|\tilde{X}\right\|_1\\
\mbox{subject to} &\qquad A\tilde{X}B^T = Y.
\end{aligned}\tag{P$_1$}\label{eq.optP1}
\endgroup
\end{equation}
Here, by $\|\tilde{X}\|_1$ we mean $\sum_{i,j}\left|\tilde{X}_{i,j}\right|$, i.e., the $\ell_1$ norm of vec$(\tilde{X})$. 

The {main part} of the paper is devoted to showing with high probability \eqref{eq.optP1} has a unique solution that equals $X$. In particular, we prove the following result. 

\begin{theorem}
\label{thm.main}
Suppose that $X$ is $d-$distributed sparse. Also, suppose that $A,B\in\left\{0,1\right\}^{m\times p}$ are drawn independently and uniformly from the $\delta-$random bipartite ensemble\footnote{Roughly speaking, the $\delta-$random bipartite ensemble consists of the set of all 0-1 matrices that have almost exactly $\delta$ ones per column. We refer the reader to Definition~\ref{def.randomGraph} for the precise definition and Section~\ref{sec.mainLemma} for more details.}. Then as long as $${m=\mathcal{O}(\sqrt{dp}\log p)}\;\;\;\;\mbox{and}\;\;\;\; \delta = \mathcal{O}(\log p),\;\;\;$$ there exists a $c>0$ such that the optimal solution $X^\ast$ of \eqref{eq.optP1} equals $X$ with probability exceeding $1-p^{-c}$. Furthermore, this holds even if $B$ equals $A$.
\end{theorem}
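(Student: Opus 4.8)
\emph{Proof strategy.} My plan is to certify that $X$ solves \eqref{eq.optP1} by the usual dual-certificate route, and to reduce the construction of the certificate to a combinatorial property of the tensor product of the two random bipartite graphs behind $A$ and $B$. Write $\mathcal{A}(Z)=AZB^T$, so that $\mathcal{A}^*(M)=A^TMB$ and $(\mathcal{A}^*M)_{ij}=\sum_{k:A_{ki}=1}\sum_{l:B_{lj}=1}M_{kl}$. The standard optimality-and-uniqueness conditions for $\ell_1$ minimization give: $X$ is the \emph{unique} optimum of \eqref{eq.optP1} as soon as (i) $\mathcal{A}$ is injective on the subspace of matrices supported on $S:=\operatorname{supp}(X)$, and (ii) there is a dual matrix $M\in\mathbb{R}^{m\times m}$ with $(\mathcal{A}^*M)_{ij}=\operatorname{sign}(X_{ij})$ for $(i,j)\in S$ and $\left|(\mathcal{A}^*M)_{ij}\right|<1$ for $(i,j)\notin S$. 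So everything reduces to producing such an $M$ and checking (i).

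Next I would translate to graphs. Read $A,B$ as biadjacency matrices of bipartite graphs $G_A,G_B$ with variable vertices $[p]$ and measurement vertices $[m]$, and build the bipartite incidence graph $H$ on $[p]^2\sqcup[m]^2$ joining cell $(i,j)$ to measurement $(k,l)$ exactly when $A_{ki}=B_{lj}=1$; this $H$ is the tensor product $G_A\otimes G_B$. Then $(\mathcal{A}^*M)_{ij}$ is the sum of $M$ over the $H$-neighbours of $(i,j)$, and measurement $(k,l)$ ``sees'' the $\asymp(\delta p/m)\times(\delta p/m)$ box $N_{G_A}(k)\times N_{G_B}(l)$ of $X$. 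With $m\asymp\sqrt{dp}\log p$ and $\delta\asymp\log p$ this box has $\asymp p/d$ cells, so for $d$-distributed sparse $X$ (at most $d$ nonzeros per row and per column, hence $|S|\le dp$) a typical box meets $S$ in $\mathcal{O}(1)$ cells, and by enlarging the constant in $m$ this expected intersection can be driven below any chosen $\lambda$. From this I would extract, through the main combinatorial lemma on the tensor ensemble (Lemma~\ref{lemma.main}), two facts that hold with probability $1-p^{-c}$, uniformly over the relevant cells: (a) \emph{isolation} --- every $(i,j)\in S$ has at least $\tau\asymp\delta^2$ private measurements, i.e.\ boxes that contain $(i,j)$ and no other element of $S$; and (b) \emph{low congestion} --- every off-support cell $(i,j)\notin S$ is $H$-adjacent to at most $\lambda'\delta^2$ measurements whose box meets $S$ in exactly one cell, where $\lambda'$ shrinks with the constant in $m$.

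Given (a) and (b) the certificate is explicit: for $(i,j)\in S$ let $T_{ij}$ be its set of private measurements and $t_{ij}:=|T_{ij}|\ge\tau$; put $M_{kl}=\operatorname{sign}(X_{ij})/t_{ij}$ when $(k,l)\in T_{ij}$ for the (unique) such $(i,j)$, and $M_{kl}=0$ otherwise. The sets $T_{ij}$ are pairwise disjoint, and for $(i,j)\in S$ the $H$-neighbours of $(i,j)$ carrying nonzero mass are exactly $T_{ij}$, so $(\mathcal{A}^*M)_{ij}=\operatorname{sign}(X_{ij})$ on $S$. For $(i,j)\notin S$ we get $\left|(\mathcal{A}^*M)_{ij}\right|\le\sum_{(i',j')\in S}t_{i'j'}^{-1}\,\#\{(k,l)\in T_{i'j'}:(k,l)\sim(i,j)\}\le\tau^{-1}\lambda'\delta^2$, which is $<1$ once the constant in $m$ is chosen large enough --- this is precisely what pins that constant down. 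Injectivity (i) also follows from (a): if $Z$ is supported on $S$ and $AZB^T=0$, then reading the $(k,l)$-entry of $AZB^T$ at a private measurement of $(i,j)$ isolates $Z_{ij}$, forcing $Z_{ij}=0$, hence $Z=0$. Combining, with probability $1-p^{-c}$ both (i) and (ii) hold and $X$ is the unique solution of \eqref{eq.optP1}; the case $B=A$ is covered because Lemma~\ref{lemma.main} is proved to allow reusing one graph.

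The hard part is (a) and (b) --- the promised new result on tensor products of bipartite graphs. Two features make it delicate. First, the measurements are strongly dependent: all measurement cells in a fixed row $k$ (resp.\ column $l$) share $N_{G_A}(k)$ (resp.\ $N_{G_B}(l)$), so the $\asymp\delta^2$ candidate boxes around a given cell are far from independent and a plain Chernoff bound over them is unavailable; one has to condition on the entire ``$A$-side'' pattern and only then resolve the ``$B$-side'', controlling the resulting correlations. Second, the regime is essentially critical --- a typical box meets $S$ in $\Theta(1)$ cells --- so isolation fails for a constant fraction of individual boxes, and one genuinely needs the $\asymp\delta^2=\log^2p$ multiplicity both to push the failure probability for ``$t_{ij}\ge\tau$'' down to the super-polynomially small level needed for the union bound over the $\le p^2$ cells, and to keep $\lambda'$ strictly below the constant required in the previous paragraph. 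Distributed sparsity enters exactly here: beyond giving $|S|\le dp$, it bounds the number of support cells lying in any single row or column of a box --- a quantity that would otherwise blow up (say if $S$ were concentrated in a few rows) and wreck isolation.
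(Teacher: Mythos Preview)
Your proposal is sound but follows a genuinely different route from the paper. The paper never constructs a dual certificate; instead it proves an $\ell_1$-restricted-isometry statement for $A\otimes A$ on $d$-distributed sparse vectors (Proposition~\ref{prop.l1rip}), uses it to derive a \emph{nullspace property} (Proposition~\ref{prop.nsp}) of the form $\|V_\Omega\|_1\le\tfrac{\epsilon}{1-3\epsilon}\|V_{\Omega^c}\|_1$ for every $V$ with $AVA^T=0$, and then finishes by the standard primal comparison $\|X\|_1<\|\tilde X\|_1$. Your ``isolation'' and ``low congestion'' facts are precisely Parts~3 and~2 of Lemma~\ref{lemma.main} repackaged: Part~3 gives $t_{ij}\ge(1-\epsilon)\delta^2$, and Part~2 bounds, for every off-support cell, the number of its neighbours lying in $N(\Omega)$ by $\epsilon\delta^2$; with these your certificate bound $|(\mathcal{A}^*M)_{ij}|\le\epsilon/(1-\epsilon)<1$ goes through. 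So both arguments rest on the same combinatorial lemma, consumed differently. What your approach buys is an explicit witness $M$ and a one-line injectivity proof via private measurements; what the paper's nullspace route buys is that the same inequality immediately yields the approximation guarantee for arbitrary $X$ (Theorem~\ref{thm.approximation}) and the perturbation bound (Corollary~\ref{cor.noisySigma}), which a pure exact-recovery dual certificate does not deliver without additional work. One small note: Part~1 of Lemma~\ref{lemma.main}, which you do not invoke, is likewise not used directly in the paper's RIP/nullspace chain; it appears only inside the proof of Lemma~\ref{lemma.main} as the conditioning event under which Parts~2 and~3 are established.
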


In Section~\ref{sec.proof}, we will prove Theorem~\ref{thm.main} for the case when $B = A$. It is quite straightforward to modify this proof to the case where $A$ and $B$ are independently drawn, since there is much more independence that can be leveraged. 

 Let us pause here and consider some implications of this theorem.
\begin{enumerate}
\item \eqref{eq.optP1} does not impose any structural restrictions on $X^\ast$. In other words, even though $X$ is assumed to be distributed sparse, this (highly non-convex) constraint need not be factored in to the optimization problem.  This ensures that $\eqref{eq.optP1}$ is a Linear Program (see e.g., \cite{bertsimas1997introduction}) and can thus be solved efficiently. 

\item Recall that what we observe can be thought of as the $\mathbb{R}^{m^2}$ vector \\$(B\otimes A)x$. Since $X$ is $d-$distributed sparse, $x$ has $\mathcal{O}(dp)$ non-zeros. Now, even if an oracle were to reveal the exact locations of these non-zeros, we would require at least $\mathcal{O}(dp)$ measurements to be able to perform the necessary inversion to recover $x$. In other words, it is absolutely necessary for $m^2$ to be at least $\mathcal{O}(dp)$. Comparing this to Theorem~\ref{thm.main} shows that the simple algorithm we propose is \emph{near optimal} in the sense that it is only a logarithm away from this trivial lower bound. This logarithmic factor also makes an appearance in the measurement bounds in the compressed sensing literature \cite{candes2006robust}. 

\item Finally, as mentioned earlier, inversion of under-determined linear systems where the linear operator assumes a tensor product structure has been studied earlier \cite{jokarKroneckerCS2010, duarteKroneckerCS2010}. However, these methods are relevant only in the regime where the sparsity of the signal to be recovered is much smaller than the dimension $p$. The proof techniques they employ will unfortunately not allow one to handle the more demanding situation the sparsity scales linearly in $p$ and if one attempted an extension of their techniques naively to this situation, one would see that the sketch size $m$ needs to scale like $\mathcal{O}(dp\log^2 p)$ in order to recover a $d-$ distributed sparse matrix $X$. This is of course uninteresting since it would imply that the size of the sketch is bigger than the size of $X$.
\end{enumerate}

It is possible that $Y$ was not exactly observed, but rather is only available to us as a corrupted version $\hat{Y}$. For instance, $\hat{Y}$ could be $Y$ corrupted by independent zero mean, additive Gaussian noise or in case of the covariance sketching problem discussed in Section~\ref{sec.covarianceSketching}, $\hat{Y}$ could be an empirical estimate of the covariance matrix $ Y = AX A^T$. In both these cases, a natural relaxation to \eqref{eq.optP1} would be the following optimization program~\eqref{eq.optP2} (with $B$ set to $A$ in the latter case). 
\begin{equation}
\begingroup
\begin{aligned}
\underset{\tilde{X}}{\mbox{minimize}}&\qquad \| A\tilde{X}B^T - \hat{Y} \|_2^2 + \lambda \left\|\tilde{X}\right\|_1
\end{aligned}\tag{P$_2$}\label{eq.optP2}
\endgroup
\end{equation}
Notice that if $X$ was a sparse covariance matrix and if $A=B=I_{p\times p}$, then~\eqref{eq.optP2} reduces to the soft thresholding estimator of sparse covariance matrices studied in \cite{rothman2009generalized}. 

While our experimental results show that this optimization  program \eqref{eq.optP2} performs well, we leave its exploration and analysis to future work. We will instead state the following ``approximation'' result that shows that the solution of \eqref{eq.optP1} is close to the optimal $d-$ distributed sparse approximation for any matrix $X$. The proof is similar to the proof of Theorem 3 in \cite{berindeIndyk2008} and is provided in Appendix~\ref{appendix:C}. Given $p\in\mathbb{N}$, let $[p]$ denote the set $\left\{1,2,\ldots,p\right\}$ and  let $\mathfrak{W}_{d,p}$ denote the following collection of subsets of $[p]\times  [p]$:
\begin{align*}
\mathfrak{W}_{d,p}:=&\left\{\Omega \subset [p]\times [p]: \left|\Omega\cap \left\{\left\{i\right\}\times [p]\right\}\right|\leq d,\left|\Omega\cap \left\{[p]\times\{i\}\right\}\right|\leq d,\mbox{ for all }i\in [p]\right\}.
\end{align*}
Notice that if a matrix $X\in\mathbb{R}^{p\times p}$ is such that there exists an $\Omega\in\mathfrak{W}_{d,p}$ with the property that $X_{ij}\neq 0$ only if $(i,j)\in\Omega$, then the matrix is $d-$distributed sparse. 

Given $\Omega\in\mathfrak{W}_{d,p}$ and a matrix $X\in\mathbb{R}^{p\times p}$, we write $X_\Omega$ to denote the projection of $X$ onto the set of all matrices supported on $\Omega$. That is, 
\begin{align*}
\left[X_\Omega\right]_{i,j} = \begin{cases}
X_{i,j} & \mbox{if } (i,j)\in\Omega \\
0 & \mbox{otherwise}
\end{cases}\;\;\;\mbox{for all }(i,j)\in[p]\times [p].
\end{align*}
\begin{theorem}
\label{thm.approximation}
Suppose that $X$ is an arbitrary $p\times p$ matrix and that the hypotheses of Theorem~\ref{thm.main} hold. Let $X^\ast$ denote the solution to the optimization program~\eqref{eq.optP1}. Then, there exist constants $c>0$ and $\epsilon\in(0,1/4)$ such that the following holds with probability exceeding $1-p^{-c}$. 
\begin{equation}
\left\|X^\ast - X\right\|_1 \leq \frac{2 - 4\epsilon}{1-4\epsilon}\left(\min_{\Omega\in\mathfrak{W}_{d,p}} \left\|X - X_\Omega\right\|_1\right).
\end{equation}
\end{theorem}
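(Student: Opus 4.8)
The plan is to derive the approximation guarantee of Theorem~\ref{thm.approximation} from a restricted nullspace (or ``RIP-1''-type) property of the operator $X \mapsto AXB^T$ on the cone of matrices that are approximately $d$-distributed sparse, mirroring the argument of Berinde--Indyk for expander-based $\ell_1$ recovery. First I would record the combinatorial fact, which is presumably established en route to Theorem~\ref{thm.main} (via the tensor-product-of-bipartite-graphs lemma alluded to in the abstract, Lemma~\ref{lemma.main}), that with probability at least $1-p^{-c}$ the operator $\Phi := B\otimes A$ satisfies an expansion/quasi-isometry bound of the form $\|\Phi x\|_1 \ge (1-2\epsilon)\,\delta^2\,\|x\|_1$ for every $x=\mathrm{vec}(W)$ with $W$ supported on some $\Omega \in \mathfrak{W}_{d,p}$, together with the trivial upper bound $\|\Phi x\|_1 \le \delta^2 \|x\|_1$ for all $x\ge 0$ coordinatewise — more precisely the relevant statement is the mixed one: for any $x$ and any $\Omega\in\mathfrak{W}_{d,p}$, $\|\Phi x_\Omega\|_1 \le \|\Phi x\|_1 + 2\epsilon\,\delta^2\,\|x\|_1$ or an equivalent ``$\ell_1$ RIP'' inequality. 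I would fix $\epsilon\in(0,1/4)$ to be the constant coming from that lemma.

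Next, let $H := X^\ast - X$ be the error matrix; by feasibility of $X$ and optimality of $X^\ast$ we have $\Phi\,\mathrm{vec}(H) = 0$ and $\|X^\ast\|_1 \le \|X\|_1$. Choose $\Omega^\ast \in \mathfrak{W}_{d,p}$ attaining $\min_{\Omega} \|X - X_\Omega\|_1$ and abbreviate $S := \Omega^\ast$, $\bar S := ([p]\times[p])\setminus S$. The standard cone computation gives, from $\|X\|_1 \ge \|X+H\|_1 \ge \|X_S\|_1 - \|H_S\|_1 + \|H_{\bar S}\|_1 - \|X_{\bar S}\|_1$, the inequality $\|H_{\bar S}\|_1 \le \|H_S\|_1 + 2\|X_{\bar S}\|_1 = \|H_S\|_1 + 2\min_\Omega\|X-X_\Omega\|_1$. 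The crux is then to bound $\|H_S\|_1$: applying the restricted-isometry inequality to $H_S$ and using $\Phi\,\mathrm{vec}(H)=0$ so that $\Phi\,\mathrm{vec}(H_S) = -\Phi\,\mathrm{vec}(H_{\bar S})$, I would get $(1-2\epsilon)\delta^2\|H_S\|_1 \le \|\Phi\,\mathrm{vec}(H_S)\|_1 = \|\Phi\,\mathrm{vec}(H_{\bar S})\|_1 \le \delta^2\|H_{\bar S}\|_1$; here the last step needs $H_{\bar S}$ to be handled either by the crude nonnegative-entrywise bound after splitting into positive and negative parts, or by a further decomposition of $\bar S$ into $d$-distributed-sparse blocks, which is exactly the device in Berinde--Indyk. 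This yields $\|H_S\|_1 \le \frac{1}{1-2\epsilon}\|H_{\bar S}\|_1$, and combining with the cone bound gives $\|H_{\bar S}\|_1(1 - \frac{1}{1-2\epsilon}) \le 2\min_\Omega\|X-X_\Omega\|_1$, hence $\|H\|_1 = \|H_S\|_1 + \|H_{\bar S}\|_1 \le (1+\tfrac{1}{1-2\epsilon})\|H_{\bar S}\|_1 \le \frac{2-4\epsilon}{1-4\epsilon}\min_\Omega\|X-X_\Omega\|_1$ after simplifying the constants, which is the claimed bound.

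The step I expect to be the main obstacle is making the middle inequality $\|\Phi\,\mathrm{vec}(H_{\bar S})\|_1 \le \delta^2 \|H_{\bar S}\|_1$ compatible with the lower RIP bound that is only valid on $d$-distributed-sparse supports: $H_{\bar S}$ is generally dense, so one cannot apply the isometry bound to it directly, and one must instead decompose $\bar S$ into controlled pieces. In the vector/expander setting Berinde--Indyk partition the complement of the support into blocks of size $\le$ sparsity in order of decreasing magnitude and telescoping; here the analogue is to partition $\bar S$ into a union of sets each lying in $\mathfrak{W}_{d,p}$ and to order the blocks so that a telescoping argument bounds the sum of their $\ell_1$ norms by $\|H_{\bar S}\|_1$ (plus a vanishing remainder), and then use the triangle inequality and the upper bound $\|\Phi z\|_1\le\delta^2\|z\|_1$ on each block. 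Verifying that such a partition of $[p]\times[p]$ into $d$-distributed-sparse classes exists with the right cardinality (roughly $\Omega$'s row/column budget $d$ against the global structure) — essentially an edge-coloring argument on the bipartite graph encoding $\bar S$ — and that the telescoping survives the two-dimensional structure is the technical heart; everything else is the routine $\ell_1$ cone manipulation above. I would therefore present the proof by (i) invoking the RIP-1 consequence of Lemma~\ref{lemma.main}, (ii) doing the cone bound, (iii) doing the block decomposition of $\bar S$, and (iv) collecting constants, and I would defer the detailed block-decomposition bookkeeping to the appendix as the paper indicates.
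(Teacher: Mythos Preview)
Your overall skeleton (optimality/cone inequality plus a nullspace-type bound on $H_S$ versus $H_{\bar S}$) is exactly the paper's strategy, but the specific nullspace bound you write down is too weak and the ensuing algebra collapses. From $(1-2\epsilon)\delta^2\|H_S\|_1\le\|\Phi H_S\|_1=\|\Phi H_{\bar S}\|_1\le\delta^2\|H_{\bar S}\|_1$ you obtain $\|H_S\|_1\le\frac{1}{1-2\epsilon}\|H_{\bar S}\|_1$ with a constant \emph{larger} than $1$; plugging this into the cone bound yields $(1-\tfrac{1}{1-2\epsilon})\|H_{\bar S}\|_1\le 2\|X_{\bar S}\|_1$, and since $1-\tfrac{1}{1-2\epsilon}=\tfrac{-2\epsilon}{1-2\epsilon}<0$ this is vacuous. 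So the chain does not close, and no amount of block decomposition of $\bar S$ fixes it: the inequality $\|\Phi z\|_1\le\delta^2\|z\|_1$ already holds for \emph{every} $z$ (each column of $A\otimes A$ has $\ell_1$-norm $\delta^2$), so splitting $H_{\bar S}$ into $d$-distributed pieces cannot improve the upper bound you are using. Your worry about justifying that step is misplaced; the step is trivially true, it is just too crude.

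What the paper actually uses is the sharper nullspace property (Proposition~\ref{prop.nsp}): for $V$ with $AVA^T=0$ and $\Omega\in\mathfrak{W}_{d,p}$, one has $\|V_\Omega\|_1\le\frac{\epsilon}{1-3\epsilon}\|V_{\Omega^c}\|_1$ (equivalently $\|V_\Omega\|_1\le\frac{\epsilon}{1-2\epsilon}\|V\|_1$). The crucial $\epsilon$ factor comes not from RIP-1 but from restricting to the rows indexed by $N(\Omega)$ and invoking Lemma~\ref{lemma.main}, part~2: each column of $A\otimes A$ indexed by $(i,i')\in\Omega^c$ has at most $\epsilon\delta^2$ nonzeros in those rows, so $\|(A\otimes A)^{N(\Omega)}v_{\Omega^c}\|_1\le\epsilon\delta^2\|V_{\Omega^c}\|_1$. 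This is essentially the ``mixed'' inequality you allude to in your first paragraph but then abandon in your main argument. With the correct constant $\alpha=\frac{\epsilon}{1-3\epsilon}$ in place of $\frac{1}{1-2\epsilon}$, your own algebra goes through verbatim and produces exactly $\frac{2-4\epsilon}{1-4\epsilon}$. So the fix is to drop the block-decomposition/edge-coloring plan entirely and instead invoke Proposition~\ref{prop.nsp} (or reprove the row-restricted bound directly from Lemma~\ref{lemma.main}); after that the proof is the four-line computation in Appendix~\ref{appendix:C}.
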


The above theorem tells us that even if $X$ is not structured in any way, the solution of the optimization program \eqref{eq.optP1} approximates $X$ as well as the best possible $d-$distributed sparse approximation of $X$ (up to a constant factor). This has interesting implications, for instance, to situations where a $d-$distributed sparse $X$ is corrupted by a ``noise'' matrix $N$ as shown in the following corollary. 
\begin{corollary}
\label{cor.noisySigma}
Suppose $X\in\mathbb{R}^{p\times p}$ is $d-$distributed sparse and suppose that $\hat{X} = X + N$. Then, the solution $X^\ast$ to the optimization program $$\underset{\tilde{X}}{\mbox{min. }}\left\|\tilde{X}\right\|_1\mbox{ subject to }A\tilde{X}B^T =A\hat{X}B^T$$
satisfies
\begin{equation}
\left\|X^\ast - X\right\|_1\leq \frac{5-12\epsilon}{1-4\epsilon}\left\|N\right\|_1
\end{equation}
\end{corollary}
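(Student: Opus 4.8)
The plan is to derive Corollary~\ref{cor.noisySigma} directly from Theorem~\ref{thm.approximation} by a short triangle-inequality argument, with no new probabilistic work: the event on which Theorem~\ref{thm.approximation} holds (probability exceeding $1-p^{-c}$) is exactly the event on which the corollary is asserted, since the optimization program in the corollary is precisely \eqref{eq.optP1} applied to the input matrix $\hat{X}=X+N$ (note $A\tilde X B^T = A\hat X B^T$ is the constraint $A\tilde X B^T = Y$ with $Y := A\hat X B^T$). So I would first invoke Theorem~\ref{thm.approximation} with the arbitrary matrix there taken to be $\hat{X}$, obtaining
\begin{equation*}
\left\|X^\ast - \hat X\right\|_1 \leq \frac{2-4\epsilon}{1-4\epsilon}\left(\min_{\Omega\in\mathfrak{W}_{d,p}}\left\|\hat X - \hat X_\Omega\right\|_1\right).
\end{equation*}

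Next I would bound the right-hand side. Since $X$ is $d$-distributed sparse, there is some $\Omega_0\in\mathfrak{W}_{d,p}$ supporting $X$, so the minimum over $\Omega$ is at most the value at $\Omega_0$, and $\hat X - \hat X_{\Omega_0} = (X+N) - (X+N)_{\Omega_0} = N - N_{\Omega_0}$ because $X = X_{\Omega_0}$. Hence $\min_\Omega \|\hat X - \hat X_\Omega\|_1 \leq \|N - N_{\Omega_0}\|_1 \leq \|N\|_1$, the last step because projecting onto a support can only remove entries (each term $|N_{ij}|$ is either kept or dropped, never increased). Therefore
\begin{equation*}
\left\|X^\ast - \hat X\right\|_1 \leq \frac{2-4\epsilon}{1-4\epsilon}\,\|N\|_1.
\end{equation*}

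Finally I would apply the triangle inequality: $\|X^\ast - X\|_1 \leq \|X^\ast - \hat X\|_1 + \|\hat X - X\|_1 = \|X^\ast - \hat X\|_1 + \|N\|_1 \leq \left(\frac{2-4\epsilon}{1-4\epsilon} + 1\right)\|N\|_1$. It remains to check that $\frac{2-4\epsilon}{1-4\epsilon} + 1 = \frac{(2-4\epsilon)+(1-4\epsilon)}{1-4\epsilon} = \frac{3-8\epsilon}{1-4\epsilon}$, which is a routine simplification. I note that this yields the constant $\frac{3-8\epsilon}{1-4\epsilon}$, whereas the stated corollary has $\frac{5-12\epsilon}{1-4\epsilon}$; the plan, if the looser constant in the statement is intended, is simply to use the weaker bound $\frac{3-8\epsilon}{1-4\epsilon} \leq \frac{5-12\epsilon}{1-4\epsilon}$ (valid since $\epsilon < 1/4$ makes the denominator positive and $5-12\epsilon \geq 3-8\epsilon \iff 2 \geq 4\epsilon$), so the displayed inequality in the corollary holds a fortiori.

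There is no serious obstacle here — the only mild subtlety is making sure the support set witnessing $d$-distributed sparsity of $X$ lies in $\mathfrak{W}_{d,p}$ (which is immediate from the definition of $\mathfrak{W}_{d,p}$ and the remark in the text that any matrix supported on such an $\Omega$ is $d$-distributed sparse) and bookkeeping the constant arithmetic; everything else is a direct consequence of Theorem~\ref{thm.approximation}.
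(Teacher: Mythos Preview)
Your proof is correct and follows essentially the same route as the paper: apply Theorem~\ref{thm.approximation} to $\hat X$ and then use the triangle inequality $\|X^\ast - X\|_1 \le \|X^\ast - \hat X\|_1 + \|N\|_1$. The only difference is that you bound $\|\hat X - \hat X_{\Omega_0}\|_1$ directly as $\|N_{\Omega_0^c}\|_1 \le \|N\|_1$, whereas the paper instead passes through $X$ via a second triangle inequality, $\|\hat X - \hat X_\Omega\|_1 \le \|\hat X - X\|_1 + \|\hat X_\Omega - X\|_1 \le 2\|N\|_1$; this is why you obtain the sharper constant $\tfrac{3-8\epsilon}{1-4\epsilon}$ and the paper lands on $\tfrac{5-12\epsilon}{1-4\epsilon}$.
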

\begin{proof}
Let $\Omega$ be the support of $X$. 
To prove the result, we will consider the following chain of inequalities. 
\begin{align}
\left\|X^\ast - X\right\|_1 &\leq \left\|X^\ast - \hat{X}\right\|_1 + \left\| \hat{X}-X\right\|_1\notag\\
&\stackrel{(a)}{\leq} \frac{2-4\epsilon}{1-4\epsilon}\left\|\hat{X}-\hat{X}_\Omega\right\|_1 + \left\|\hat{X}-X\right\|_1\notag\\
&\leq \frac{2-4\epsilon}{1-4\epsilon}\left\|\hat{X}-X\right\|_1 + \frac{2-4\epsilon}{1-4\epsilon}\left\|\hat{X}_\Omega-X\right\|_1 + \left\|\hat{X}-X\right\|_1\notag\\
&\stackrel{(b)}{\leq} \frac{5-12\epsilon}{1-4\epsilon}\left\|\hat{X}-X\right\|_1\notag\\
&\stackrel{(c)}{=} \frac{5-12\epsilon}{1-4\epsilon}\left\|N\right\|_1.\notag
\end{align}
Here $(a)$ follows from Theorem~\ref{thm.approximation} since $\Omega\in\mathfrak{W}_{d,p}$ and $(b)$ follows from the fact that \\$\left\|\hat{X}_\Omega-X\right\|\leq\left\|\hat{X}-X\right\|$ since $X_{\Omega^c}$ is $\mathbf{0}_{p\times p}$. Finally, in $(c)$ we merely plug in the definition of $\hat{X}$. 
\end{proof}

\subsection{The Rectangular Case and Higher Dimensional Signals}
\label{sec.rectangular}
While Theorem~\ref{thm.main}, as stated, applies only to the case of square matrices $X$, we can extend our result in a straightforward manner to the rectangular case. Consider a matrix $X \in \R^{p_1 \times p_2}$ where (without loss of generality) $p_1 < p_2$. We assume that the row degree is $d_r$ (i.e. no row of $X$ has more than $d_r$ non-zeros) and that the column degree is $d_c$. Consider sketching matrices $A \in \R^{m \times p_1}$ and $B \in \R^{m \times p_2}$ and the sketching operation:
$$
Y=AXB^T.
$$
Then we have the following corollary:
\begin{corollary}
\label{cor.rectangular}
Suppose that $X$ is distributed sparse with row degree $d_r$ and colum degree $d_c$. Also, suppose that $A\in\left\{0,1\right\}^{m\times p_1}$, $B\in\left\{0,1\right\}^{m\times p_2}$ are drawn independently and uniformly from the $\delta-$random bipartite ensemble. Let us define $p=\max (p_1, p_2)$ and $d=\max (d_r, d_c)$.

Then if $${m=\mathcal{O}(\sqrt{dp}\log p)}\;\;\;\;\mbox{and}\;\;\;\; \delta = \mathcal{O}(\log p),\;\;\;$$ there exists a $c>0$ such that the optimal solution $X^\ast$ of \eqref{eq.optP1} equals $X$ with probability exceeding $1-p^{-c}$. 
\end{corollary}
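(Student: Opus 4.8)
The plan is to reduce the rectangular problem to the square case already settled by Theorem~\ref{thm.main} via a zero-padding embedding. Assume, as in the statement, that $p_1<p_2$ (if $p_1=p_2$ this is literally Theorem~\ref{thm.main}, and if $p_1>p_2$ one transposes). Let $\bar X\in\mathbb{R}^{p_2\times p_2}$ be the matrix obtained from $X$ by appending $p_2-p_1$ zero rows, $\bar X=\left(\begin{smallmatrix}X\\ 0\end{smallmatrix}\right)$. Every column of $\bar X$ still has at most $d_c$ nonzeros, and every row has at most $d_r$ nonzeros (the appended rows are identically zero), so $\bar X$ is $d$-distributed sparse with $d=\max(d_r,d_c)$ --- exactly the parameter in the hypothesis.

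Next I would set up the square sensing problem. Since the $\delta$-random bipartite ensemble generates the columns of its matrices independently (see Definition~\ref{def.randomGraph}), I may assume without loss of generality that $A$ is the submatrix formed by the first $p_1$ columns of a matrix $\bar A\in\{0,1\}^{m\times p_2}$ drawn from the width-$p_2$ $\delta$-random bipartite ensemble, with $\bar A$ independent of $B$. The padding is arranged so that
\[
\bar A\,\bar X\,B^{T}\;=\;AXB^{T}\;=\;Y,
\]
because the appended zero rows of $\bar X$ annihilate the appended columns of $\bar A$. Applying Theorem~\ref{thm.main} to the recovery of the $d$-distributed sparse matrix $\bar X$ from $Y=\bar A\bar X B^{T}$ --- with ambient dimension $p_2=p$, so that the hypotheses $m=\mathcal{O}(\sqrt{dp}\log p)$ and $\delta=\mathcal{O}(\log p)$ are precisely what is assumed --- yields a $c>0$ such that, with probability at least $1-p^{-c}$, the matrix $\bar X$ is the \emph{unique} minimizer of $\|\tilde Z\|_1$ over $\tilde Z\in\mathbb{R}^{p_2\times p_2}$ subject to $\bar A\tilde Z B^{T}=Y$.

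Finally I would push this conclusion back to the rectangular program. Let $X^\ast\in\mathbb{R}^{p_1\times p_2}$ be any optimum of the rectangular \eqref{eq.optP1}; since $X$ itself is feasible, $\|X^\ast\|_1\le\|X\|_1$. The zero-padded matrix $\left(\begin{smallmatrix}X^\ast\\ 0\end{smallmatrix}\right)$ is feasible for the square program by the same identity as above, and its $\ell_1$ norm equals $\|X^\ast\|_1\le\|X\|_1=\|\bar X\|_1$; by the uniqueness just established it must equal $\bar X$, whence $X^\ast=X$. This proves the corollary.

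The reduction is essentially bookkeeping, and I do not expect a genuine obstacle: the only steps that require a little care are verifying that restricting a draw from the $\delta$-random bipartite ensemble to a subset of its columns is itself a genuine draw from the narrower ensemble (so the distributional hypotheses on $A$ are met) and tracking the independence of $A$ and $B$ through the embedding. The same device --- appending zero slices along the short modes and correspondingly extending the sketching matrices --- also extends the result to higher-dimensional distributed-sparse signals.
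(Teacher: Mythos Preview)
Your proposal is correct and follows essentially the same approach as the paper: zero-pad $X$ along the short dimension to obtain a square $d$-distributed sparse matrix, augment the corresponding sketching matrix with fresh independent columns from the $\delta$-random bipartite ensemble, observe that the augmented sketch equals the original $Y$, and invoke Theorem~\ref{thm.main}. You are in fact slightly more careful than the paper in the final step, where you explicitly verify that uniqueness in the square program forces $X^\ast=X$ in the rectangular program via the feasibility of the zero-padded $X^\ast$; the paper leaves this implication implicit.
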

\begin{proof}
Let us define the matrix $\tilde{X} \in \R^{p \times p}$ as 
$$
\tilde{X} = \left[ \begin{array}{c} X \\ 0 \end{array} \right],
$$
i.e. it is made square by padding additional zero rows.
Note that $\tilde{X}$ has degree $d=\max (d_r, d_c)$.
Moreover note that the matrix $A \in \R^{m \times p_1}$ can be augmented to $\tilde{A} \in \R^{m \times p}$ via:
$$
\tilde{A} = \left[ \begin{array}{cc} 
A & \bar{A}
\end{array} \right]
$$
where $\bar{A} \in \R^{m \times (p-p_1)}$ is also drawn from the $\delta$-random bipartite ensemble. Then one has the relation:
$$
Y=\tilde{A}\tilde{X}B^T.
$$ 
Thus, the rectangular problem can be reduced to the standard square case considered in Theorem~\ref{thm.main}, and the result follows.

\end{proof}

The above result shows that a finer analysis is required for the rectangular case. For instance, if one were to consider a scenario where $p_1 = 1$, then from the compressed sensing literature, we know that the result of Corollary~\ref{cor.rectangular} is weak. We believe that determining the right scaling of the sketch dimension(s) in the case when $X$ is rectangular is an interesting avenue for future work. 

Finally, we must also state that while the results in this paper only deal with two-dimensional signals, similar techniques can be used to deal with higher dimensional tensors that are distributed sparse. We leave a detailed exploration of this question to future work.


\subsection{Applications}
\label{sec.applications}
It is instructive at this stage to consider a few examples of the framework we set up in this paper. These applications demonstrate that the modeling assumptions we make viz., tensor product sensing and distributed sparsity are important and arise naturally in a wide variety of contexts.
\subsubsection{Covariance Estimation from Compressed realizations \\or Covariance Sketching}
\label{sec.covarianceSketching}
One particular application that will be of interest to us is the estimation of covariance matrices from sketches of the sample vectors. We call this \emph{covariance sketching}.

Consider a scenario in which the covariance matrix $\Sigma\in\mathbb{R}^{p\times p} $ of a high-dimensional zero-mean random vector $\xi = (\xi_1, \ldots, \xi_p) ^T$ is to be estimated. In many applications of interest, one determines $\Sigma$ by conducting correlation tests for each pair of covariates $\xi_i, \xi_j$ and computing an estimate of $\mathbf{E}[\xi_i \xi_j]$ for $i, j = 1, \ldots, p$. This requires one to perform correlation tests for $\mathcal{O}(p^2)$ pairs of covariates, a daunting task in the high-dimensional setting. Perhaps most importantly, in many cases of interest, the underlying covariance matrix may have structure, which such an approach may fail to exploit. For instance if $\Sigma$ is very sparse, it would be vastly more efficient to perform correlation tests corresponding to only the non-zero entries. The chief difficulty of course is that the sparsity pattern is rarely known in advance, and finding this is often the objective of the experiment.

In other settings of interest, one may obtain statistical samples by observing $n$ independent \emph{sample paths} of the statistical process. When $\xi$ is high-dimensional, it may be infeasible or undesirable to sample and store the entire sample paths $\xi^{(1)}, \ldots, \xi^{(n)} \in \mathbb{R}^p$, and it may be desirable to reduce the dimensionality of the acquired samples.

Thus in the high-dimensional setting we propose an alternative acquisition mechanism: pool covariates together  to form a collection of new variables $Z_1, \ldots, Z_m$, where $m<p$. For example one may construct:
$$
Z_1=\xi_1+\xi_2+\xi_6, \;\; Z_2=\xi_1+\xi_4+\xi_8+\xi_{12}, \;\;\ldots
$$
and so on; more generally we have measurements of the form $Z=A\xi$ where $A \in \R^{m \times p}$ and typically $m\ll p$. We call the thus newly constructed covariates $Z = \left(Z_1, \ldots, Z_m\right)$ a sketch of the random vector $\xi$.

More formally, the covariance sketching problem can be stated as follows. 
Let $\xi^{(1)},\xi^{(2)},\ldots,\xi^{(n)}\in\mathbb{R}^{p}$ be $n$ independent and identically distributed $p-$variate random vectors and let $\Sigma\in\mathbb{R}^{p\times p}$ be their unknown covariance matrix. Now, suppose that one has access to the $m-$dimensional \emph{sketch vectors} $Z^{(i)}$ such that $$Z^{(i)} = A\xi^{(i)}, \; \; i =1,2,\ldots,n,$$
where $A\in\mathbb{R}^{m\times p}, m<p$ is what we call a \emph{sketching matrix}. The goal then is to recover $\Sigma$ using only $\{Z^{(i)}\}_{i=1}^n$. The sketching matrices we will focus on later will have randomly-generated binary values, so each element of $Z^{(i)}$ will turn out to be a sum (or ``pool'') of a random subset of the covariates.

Notice that the sample covariance matrix computed using the vectors $\left\{Z^{(i)}\right\}_{i=1}^n$ satisfies the following. 
\begin{align*}
\hat{\Sigma}^{(n)}_Z 
&:=\frac{1}{n}\sum_{i=1}^nZ^{(i)}(Z^{(i)})^{T}\\
&= A\left(\frac{1}{n}\sum_{i=1}^n\xi^{(i)}(\xi^{(i)})^{T}\right)A^T\\
&= A\hat{\Sigma}^{(n)}A^T,
\end{align*}
where $\hat{\Sigma}^{(n)}:=\frac{1}{n}\sum_{i=1}^n\xi^{(i)}(\xi^{(i)})^{T}$ is the (maximum likelihood) estimate of $\Sigma$ from the samples $\xi^{(1)}\ldots,\xi^{(n)}$. 

To gain a better understanding of the covariance sketching problem, it is natural to first consider the stylized version of the problem suggested by the above calculation. That is, whether it is possible to efficiently recover a matrix $\Sigma\in\mathbb{R}^{p\times p}$ given the ideal covariance matrix of the sketches $\Sigma_Z = A\Sigma A^T\in\mathbb{R}^{m\times m}$. The analysis in the current paper focuses on exactly this problem and thus helps in exposing the most unique and challenging aspects of covariance sketching.

The theory developed in this paper tells us that at the very least, one needs to restrict the underlying random vector $\xi$ to have the property that each $\xi_i$ depends on only a few (say, $d$) of the other $\xi_j$'s. Notice that this would of course imply that the true covariance matrix $\Sigma$ will be $d-$distributed sparse. Applying Theorem~\ref{thm.main}, especially the version in which the matrices $A$ and $B$ are identical, to this stylized situation reveals the following result. If $A$ is chosen from a particular random ensemble and if one gets to observe the covariance matrix $A\Sigma A^T$ of the sketch random vector $Z = A\xi$, then using a very efficient convex optimization program, one can recover $\Sigma$ exactly. 

Now, suppose that $\xi$ and $A$ are as above and that we get to observe $n$ samples $Z^{(i)} = A\xi^{(i)}, i = 1,2,\ldots,n$ of the sketch $Z = A\xi$. 
Notice that we can can consider $\hat{\Sigma}^{(n)}:=\frac{1}{n}\sum_{i=1}^n\xi^{(i)}(\xi^{(i)})^{T}$ to be a ``noise corrupted version'' of $\Sigma$ since  we can write $$\hat{\Sigma}^{(n)} = \Sigma + (\hat{\Sigma}^{(n)} - \Sigma),$$
where, under reasonable assumptions on the underlying distribution, \\$\left\|\hat{\Sigma}^{(n)}-\Sigma\right\|_1\to0$ almost surely as $n\to\infty$ by the strong law of large of numbers. Therefore, an application of Theorem~\ref{thm.approximation} tells us that solving \eqref{eq.optP1} with the observation matrix $\hat{\Sigma}^{(n)}_Z$ gives us an asymptotically consistent procedure to estimate the covariance matrix $\Sigma$ from sketched realizations. 

We anticipate that our results will be interesting in many areas such as 
quantitative biology where it may be possible to naturally pool together covariates and measure interactions at this pool level. Our work shows  that covariance structures that occur naturally are amenable to covariance sketching, so that drastic savings are possible  when correlation tests are performed at the pool level, rather than using individual covariates.

The framework we develop in this paper can also be used to accomplish \emph{cross covariance sketching}. That is, suppose that $\xi$ and $\zeta$ are two zero mean $p-$variate random vectors and suppose that $\Sigma_{\xi\zeta}\in\mathbb{R}^{p\times p}$ is an unknown matrix such that $\left[\Sigma_{\xi\zeta}\right]_{ij} = \mathbb{E}[\xi_i\zeta_j]$. Let $\{\xi^{(i)}\}_{i=1}^n$ and $\{\zeta^{(i)}\}_{i=1}^n$ be $2n$ independent and identically distributed random realizations of $\xi$ and $\zeta$ respectively. The goal then, is to estimate $\Sigma_{\xi\zeta}$ from the $m$ dimensional \emph{sketch vectors} $Z^{(i)}$ and $W^{(i)}$ such that $$Z^{(i)} = A\xi^{(i)}, W^{(i)} = B\zeta^{(i)} \; \; i =1,2,\ldots,n,$$
where $A,B\in\mathbb{R}^{m\times p}, m<p$. 

As above, in the idealized case, Theorem~\ref{thm.main} shows that the cross-covariance matrix $\Sigma_{\xi\zeta}$ of $\xi$ and $\zeta$ can be exactly recovered from the cross-covariance matrix $\Sigma_{ZW} = A\Sigma_{\xi\zeta} B^T$ of the sketched random vectors $W$ and $Z$ as long as $\Sigma_{\xi\zeta}$ is distributed sparse. In the case we have $n$ samples each of the sketched random vectors, an application of Theorem~\ref{thm.approximation} to this problem tells us that \eqref{eq.optP1} is an efficient and asymptotically consistent procedure to estimate a distributed sparse $\Sigma_{\xi\zeta}$ from compressed realization of $\xi$ and $\zeta$. 

We note that the idea of pooling information in statistics, especially in the context of meta analysis is a classical one \cite{hedges1985statistical}. For instance the classical Cohen's $d$ estimate uses the idea of pooling samples obtained from different distributions to obtain accurate estimates of a common variance. While at a high level the idea of pooling is related, we note that our notion is qualitatively different in that we propose pooling covariates themselves into sketches and obtain samples in this reduced dimensional space.

\subsubsection{Graph Sketching}
Large graphs play an important role in many prominent problems of current interest; two such examples are graphs associated to communication networks (such as the internet) and social networks. Due to their large sizes it is difficult to store, communicate, and analyze these graphs, and it is desirable to compress these graphs so that these tasks are easier. The problem of compressing or sketching graphs has recently gained attention in the literature \cite{graph_sketching, network_compression}.

In this section we propose a new and natural notion of compression of a given graph $G=(V,E)$. The resulting ``compressed'' graph is a weighted graph $\hat{G}=(\hat{V},\hat{E})$, where $\hat{V}$ has a much smaller cardinality than $V$. Typically, $\hat{G}$ will be a complete graph, but the edge weights will encode interesting and valuable information about the original graph. 

Partition the vertex set $V=V_1 \cup V_2 \cup \ldots \cup V_m$; in the compressed graph $\hat{G}$, each partition $V_i$ is represented by a node. (We note that this need not necessarily be a disjoint partition, and we allow for the possibility for $V_i \cap V_j \neq \emptyset$.) For each pair $V_i, V_j \in \hat{V}$, the associated edge weight is the total number of edges crossing from nodes in $V_i$ to the nodes in $V_j$ in the original graph $G$. Note that if an index $k \in V_i \cap V_j$, the self edge $(k,k)$ must be included when counting the total number of edges between $V_i$ and $V_j$. (We point out that the edge $(V_k,V_k) \in \hat{E}$ also carries a non-zero weight; and is precisely equal to the number of edges in $G$ that have both endpoints in $V_k$. See Fig. \textcolor{red}{1} for an illustrative example.)

Define $A_i$ to be the (row) indicator vector for the set $V_i$, i.e. 
$$
A_{ij}=\left\{ \begin{array}{ll} 
1 & \text{ if } j \in V_i \\
0 & \text{ otherwise}
\end{array} \right.
$$
If $X$ denotes the adjacency matrix of $G$, then $Y:=A X A^T$ denotes the matrix representation of $\hat{G}$. The sketch $Y$ has two interesting properties:
\begin{itemize}
\item The encoding faithfully preserves high-level ``cut'' information  about the original graph. For instance information such as the weight of edges crossing between the partitions $V_i$ and $V_j$ is faithfully encoded. This could be useful for networks where the vertex partitions have a natural interpretation such as geographical regions; questions about the total network capacity between two regions is directly available via this method of encoding a graph. Approximate solutions to related questions such as maximum flow between two regions (partitions) can also be provided by solving the problem on the compressed graph. 
\item When the graph is bounded degree, the results in this paper show that there exists a suitable random partitioning scheme such that the proposed method of encoding the graph is lossless. Moreover, the original graph $G$ can be unravelled from the smaller sketched graph $\hat{G}$ efficiently using the convex program \eqref{eq.optP1}.

\end{itemize}
\begin{figure}[h!]
  \centering
    \includegraphics[scale=0.4]{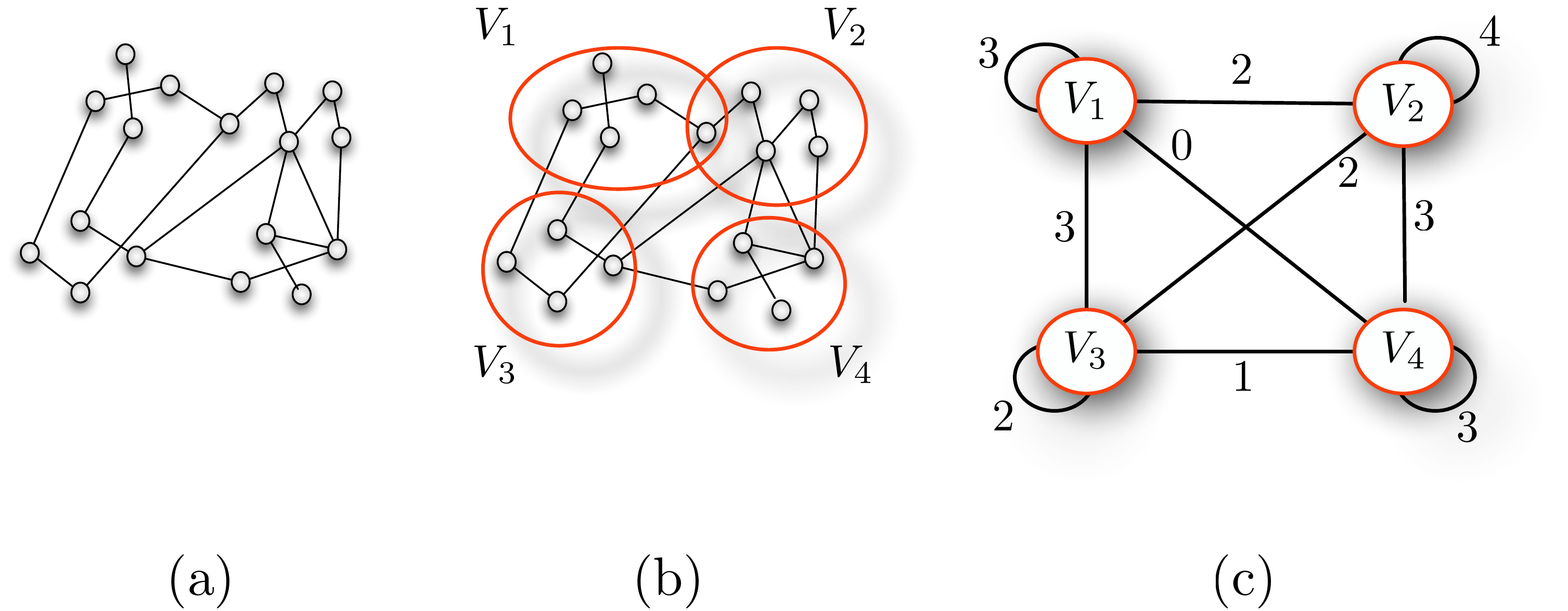}
\caption{An example illustrating graph sketching. (a) A graph $G$ with $17$ nodes (b) Partitioning the nodes into four partitions $V_1, V_2, V_3, V_4$ (c) The sketch of the graph $G$. The nodes represent the partitions and the edges in the sketch represent the total number of edges of $G$ that cross partitions.}
\end{figure}
\subsubsection{Multidimensional Signal Processing}
\label{sec.multiDimensionalSignalProcessing}
Multi-dimensional signals arise in a variety of applications, for instance images are naturally represented as two-dimensional signals $f(\cdot,\cdot)$ over some given domain. 

Often  it is more convenient to view the signal not in the original domain, but rather in a transformed domain. Given some one dimensional family of ``mother" functions $\psi_u(t)$ (usually an orthonormal family of functions indexed with respect to the transform variable $u$), such a family induces the transform for a one dimensional signal $f(t)$ (with domain $\mathcal{T}$) via
$$
\hat{f}(u)=\int_{t \in \mathcal{T}} f(t) \psi_u(t).
$$
For instance if $\psi_u(t):= \exp(-i2 \pi u t)$, this is the Fourier transform, and if $\psi_u(t)$ is chosen to be a wavelet function (where $u=(a,b)$, the translation and scale parameters respectively) this generates the well-known wavelet transform that is now ubiquitous in signal processing.

Using $\psi_u(t)$ to form an orthonormal basis for one-dimensional signals, it is straightforward to extend to a basis for two-dimensional signal by using the functions $\psi_u(t) \psi_v(r)$. Indeed, this defines a two-dimensional transform via
$$
\hat{f}(u,v)=\int_{(t, r) \in \mathcal{X} \times \mathcal{X} } f(t,r) \psi_u(t) \psi_v(r).
$$
Similar to the one-dimensional case, appropriate choices of $\psi$ yeild standard transforms such as the two-dimensional Fourier transform and the two-dimensional wavelet transform. The advantage of working with an alternate basis as described above is that signals often have particularly simple representations when the basis is appropriately chosen. It is well-known, for instance, that natural images have a sparse representation in the wavelet basis (see Fig. \ref{fig:wavelet}). Indeed, many natural images are not only sparse, but they are also distributed sparse, when represented in the wavelet basis. This enables compression by performing ``pooling'' of wavelet coefficients, as described below.

\begin{figure}[h!]
  \centering
    \includegraphics[trim=10mm 40mm 10mm 30mm, clip, scale=0.7]{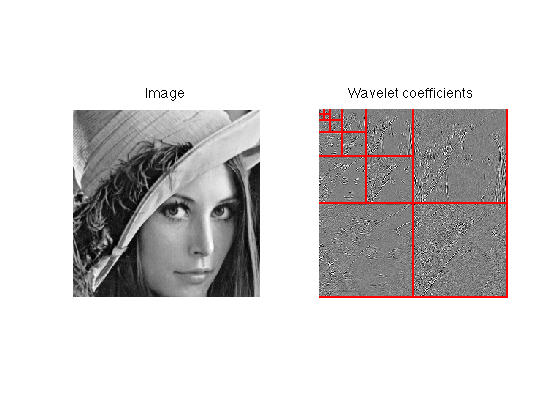}
\caption{Note that the wavelet representaion of the image is distributed sparse.}
\label{fig:wavelet}
\end{figure}

In many applications, it is more convenient to work with discrete signals and their transforms (by discretizing the variables $(t,r)$ and the transform domain variables $(u,v)$). It is natural to represent the discretization of the two-dimensional signal $f(t,r)$ by a matrix $F \in \R^{p \times p}$. The corresponding discretization of $\psi_u(t)$ can be represented as a matrix $\Psi=[\Psi]_{ut}$, and the discretized version of the $\hat{f}(u,v)$, denoted by $\hat{F}$ is given  by:
$$
\hat{F}=\Psi F \Psi^T.
$$

As noted above, in several applications of interest, when the basis $\Psi$ is chosen appropriately, the signal has a succinct representation and the corresponding matrix $\hat{F}$ is sparse. This is true, for instance, when $F$ represents a natural image and $\hat{F}$ is the wavelet transform of $F$. Due to the sparse representability of the signal in this basis, it is possible to acquire and store the signal in a \emph{compressive} manner. For instance, instead of sensing the signal $F$ using $\Psi$ (which correpsonds to sensing the signal at every value of the transform variable $u$), one could instead form ``pools'' of transform variables $S_i = \left\{u_{i1}, u_{i2}\ldots, u_{ik}\right\}$ and sense the signal via
$$
A \Psi = \left[ \begin{array}{c} \sum_{u \in S_1 } \Psi_u \\ \vdots \\ \sum_{u \in S_m } \Psi_u \end{array} \right] ,
$$
where the matrix $A$ corresponds to the pooling operation. 
This means of compression corresponds to ``mixing'' measurements at randomly chosen transform domain values $u$. (When $\Psi$ is the Fourier transform, this corresponds to randomly chosen frequencies, and when $\Psi$ is the wavelet, this corresponds to mixing randomly chosen translation and scale parameters) .  When the signal $F$ is acquired in this manner, we obtain measurements of the form:
$$
Y=A \hat{F} A^T,
$$
where $\hat{F}$ is suitably sparse. Note that one may choose different random mixtures of measurements for the $t$ and $r$ ``spatial'' variables, in which case one would obtain measurements of the form:
$$
Y=A \hat{F} B^T.
$$
The theory developed in this paper shows how one can recover the multi-dimensional signal $F$ from such an undersampled acquisition mechanism. In particular, our results will show that if the pooling of the transform variable is done suitably randomly, then there is an efficient method based on linear programming that can be used to recover the original multi-dimensional signal. 


\subsection{Related Work and Obstacles to Common Approaches}
\label{sec.relatedWork}
The problem of recovering sparse signals via $\ell_1$ regularization and convex optimization has been studied extensively in the past decade; our work fits broadly into this context. In the signal processing community, the literature on compressed sensing \cite{candes2006robust,donoho2006compressed} focuses on recovering sparse signals from data.  In the statistics community, the LASSO formulation as proposed by Tibshirani, and subsequently analyzed (for variable selection) by Meinshausen and B\"{u}hlmann \cite{meinshausenBuhlmann2006}, and Wainwright \cite{wainwright2009sharp} are also closely related. Other examples of
structured model selection include estimation of models with a few latent factors (leading to low-rank covariance matrices) \cite{Fan_Fan_Lv_2007}, models specified by banded or sparse covariance matrices \cite{BickelLevina2,BickelLevina1}, and Markov or graphical models \cite{Lauritzen_book,meinshausenBuhlmann2006,RWRY}. These ideas have been studied in depth and extended to analyze numerous other model selection problems in statistics and signal processing \cite{Recht,venkatConvexFOCM2012,RobustPCA}. 

Our work is also motivated by the work on sketching in the computer science community; this literature deals with the idea of compressing high-dimensional data vectors via projection to low-dimensions while preserving pertinent geometric properties. The celebrated Johnson-Lindenstrauss Lemma \cite{johnson1984extensions} is one such result, and the idea of sketching has been explored in various contexts \cite{Andoni09efficientsketches,Kane:2011:FME:1993636.1993735}. The idea of using random bipartite graphs and their related expansion properties, which motivated our approach to the problem, have also been studied in past work \cite{berindeIndyk2008,journals/corr/abs-0902-4045,khajehnejad2011sparse}.

While most of the work on sparse recovery focuses on sensing matrices where each entry is an i.i.d. random variable, there are a few lines of work that explore structured sensing matrices. For instance, there have been studies of matrices with Toeplitz structure \cite{haupt2010toeplitz}, or those with random entries with independent rows but with possibly dependent columns \cite{vershynin2010introduction,rudelson2008sparse}. Also related is the work on deterministic dictionaries for compressed sensing \cite{cormode2006combinatorial}, although those approaches yield results that are too weak for our setup. 

One interesting aspect of our work is that we show that it is possible to use highly constrained sensing matrices (i.e. those with tensor product structure) to recover the signal of interest. Many standard techniques fail in this setting.  Restricted isometry based approaches \cite{candes2005decoding} and coherence based approaches \cite{donoho2003optimally,gribonval2003sparse,tropp2004greed} fail due to a lack of independence structure in the sensing matrix. Indeed, the restricted isometry constants as well as the coherence constants are known to be weak for tensor product sensing operators \cite{duarteKroneckerCS2010,jokarKroneckerCS2010}. Gaussian width based analysis approaches \cite{venkatCnvxGeometryArxiv2010} fail because the kernel of the sensing matrix is not a uniformly random subspace and hence not amenable to a similar application of Gordon's (``escape through the mesh'') theorem. We overcome these technical difficulties by working directly with combinatorial properties of the tensor product of a random bipartite graph, and exploiting those to prove the so-called nullspace property \cite{Donoho2001,Cohen2009}.
\section{Experiments}
\label{sec.experiments}
We demonstrate the validity of our theory with some preliminary experiments in this section. Figure~\ref{fig.recoveryExample} shows a $40\times 40$ distributed sparse matrix on the left side. The matrix on the right is a perfect reconstruction using a sketch dimension of $m=21$. 
\begin{figure}[!t]
\centering
\includegraphics[scale=0.7]{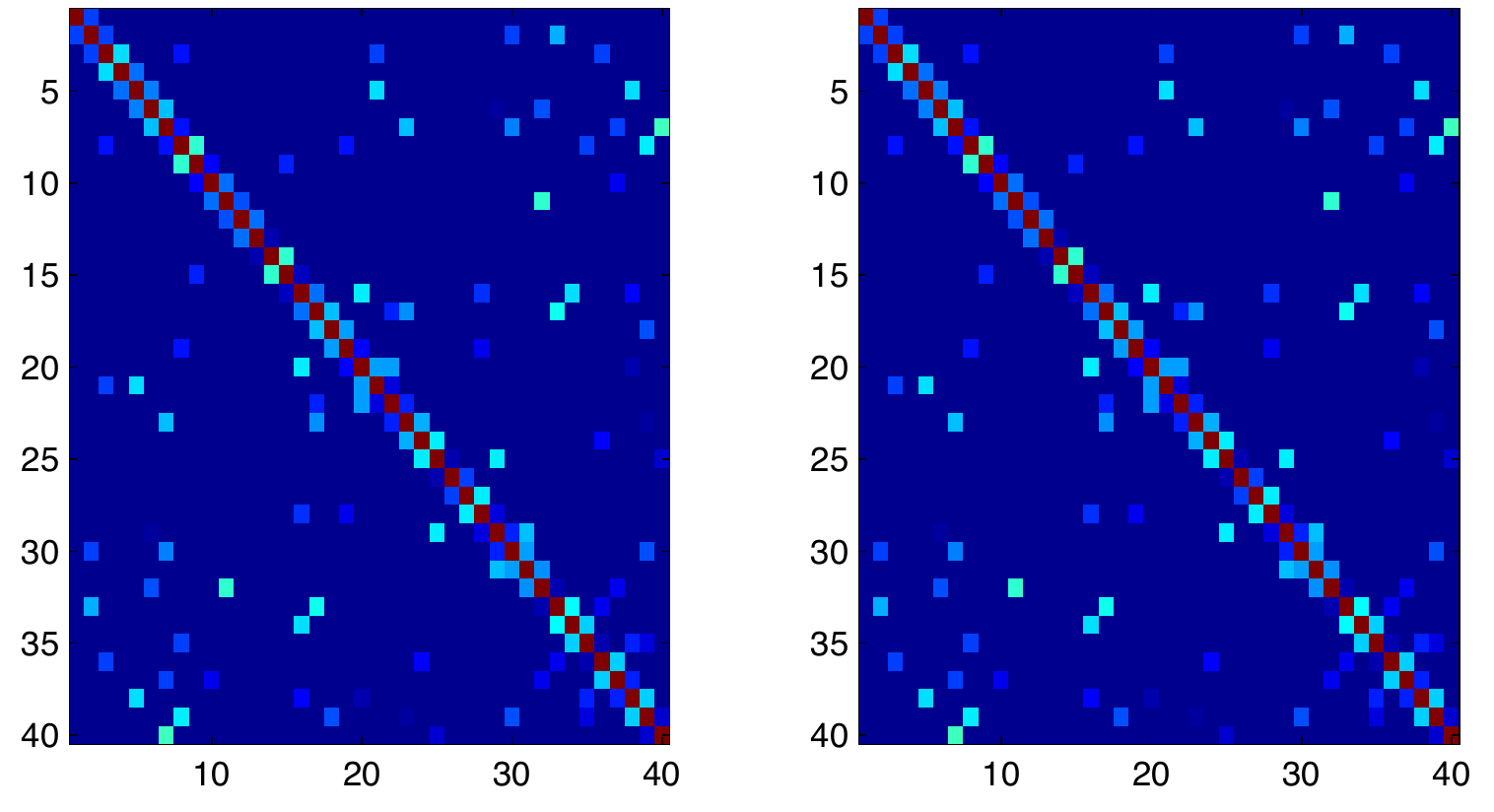}
\caption{The matrix on the left is a $40\times 40$ sparse matrix and the matrix on the right is a perfect reconstruction with $m = 21$. }
\label{fig.recoveryExample}
\end{figure}
\begin{figure}[!b]
\centering
\includegraphics[scale=0.45]{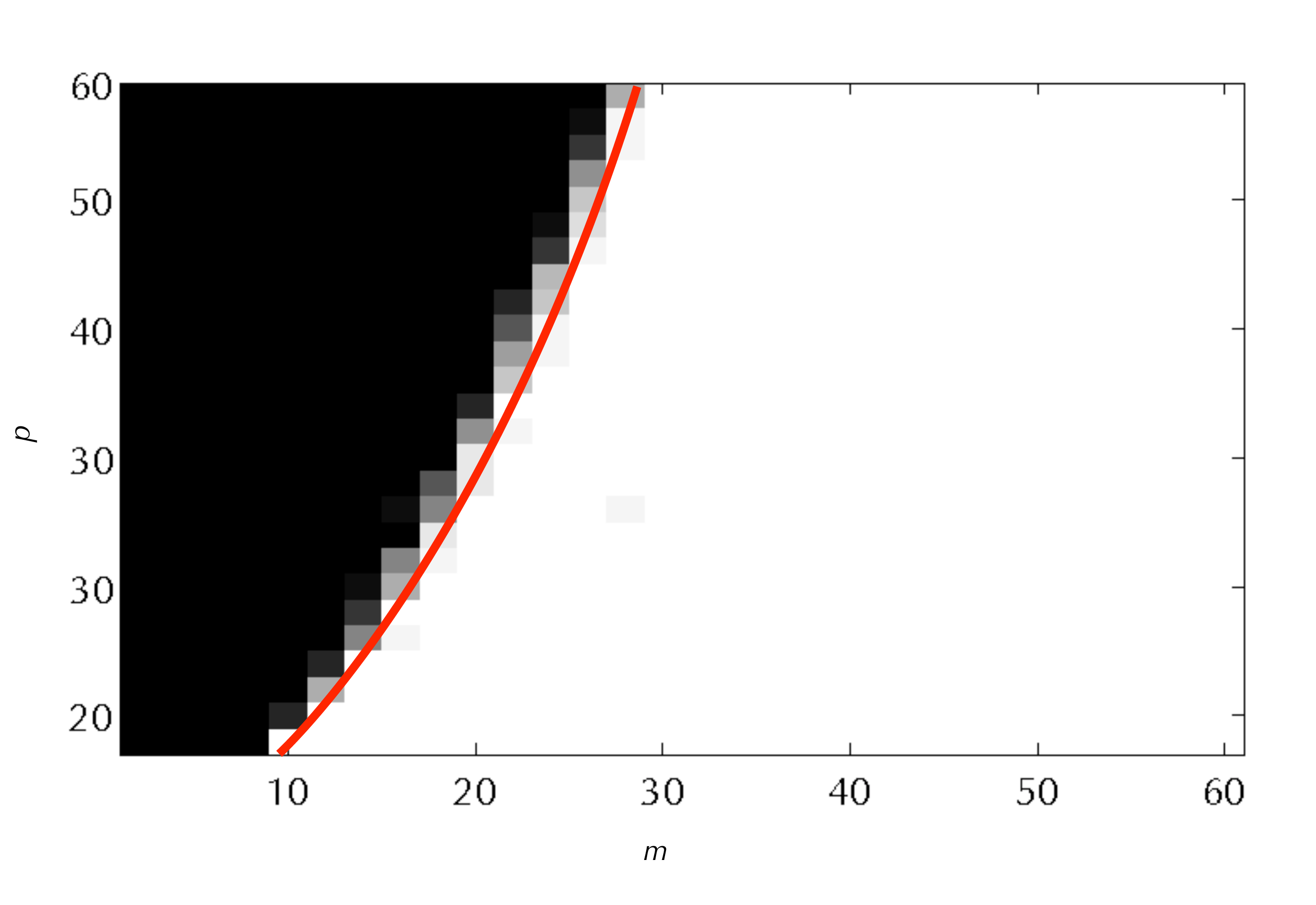}
\caption{Phase transition plot. The $(i,j)$-th pixel shows (an approximation) to the probability of success of the optimization problem \eqref{eq.optP1} in recovering a distributed sparse $X\in\mathbb{R}^{i\times i}$ with sketch-size $j$. The (red) solid line shows the boundary of the phase transition regime and is approximately the curve $p=\frac{1}{14}m^2$}
\label{fig.phaseTransition}
\end{figure}

\begin{figure}[!t]
\centering
\includegraphics[scale=0.5]{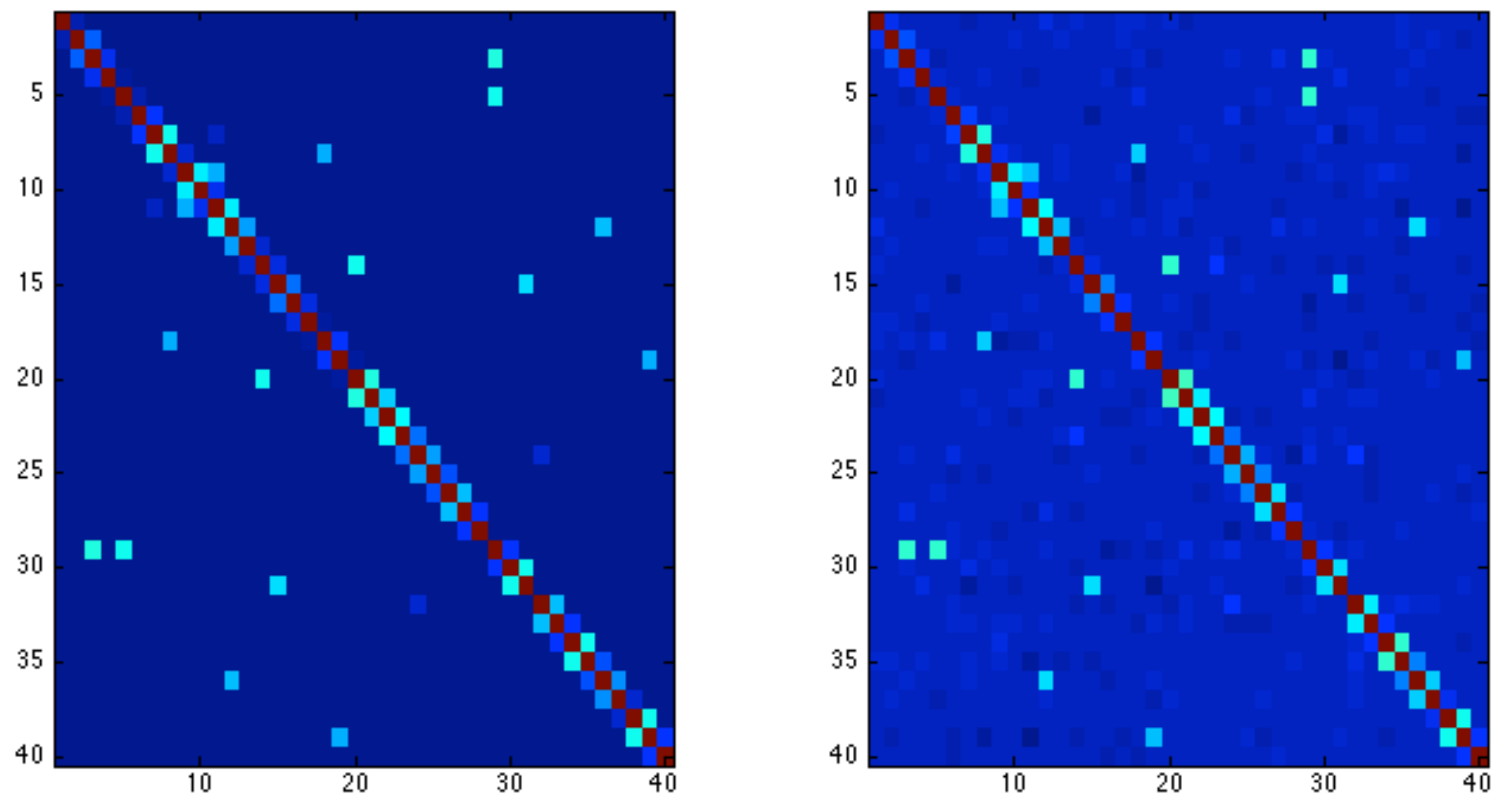}
\caption{The matrix on the left is a $40\times 40$ distributed sparse matrix. The matrix on the right was reconstructed using $n=2100$ samples and with sketches of size $m=21$.}
\label{fig.approximateReconstruction}
\end{figure}

Figure~\ref{fig.phaseTransition} is what is known now as the ``phase transition diagram''. Each coordinate $(i,j)\in\{10,12,\dots,60\}\times\left\{2,4,\ldots,60\right\}$ in the figure corresponds to an experiment with $p=i$ and $m=j$. The value at the coordinate $(i,j)$ was generated as follows. A random 4-distributed sparse $X\in\mathbb{R}^{i\times i}$ was generated and a random $A\in\mathbb{R}^{j\times i}$ was generated as {adjacency matrix of a graph as described in Definition~\ref{def.randomGraph}}. Then the optimization problem \eqref{eq.optP1} was solved using the CVX toolbox \cite{cvx,gb08}. The solution $X^\ast$ was compared to $X$ in the $\left\|\cdot\right\|_\infty$ norm (upto numerical precision errors). This was repeated 40 times and the average number of successes was reported in the $(i,j)$-th spot. In the figure, the  white region denotes success during each trial and the black region denotes failure in every single trial and there is a sharp \emph{phase transition} between successes and failures. In fact, the curve that borders this phase transition region roughly looks like the curve $p = \frac{1}{14}m^2$ which is what our theory predicts (upto constants and log factors). 

We also ran some preliminary tests on trying to reconstruct a covariance matrix from sketches of samples drawn from the original distribution.To factor in the ``noise'', we replaced the equality constraint in \eqref{eq.optP1} with a constraint which restricts the feasible set to be the set of all $X$ such that $\left\|AXA^T-\hat{X}^{(n)}_Y\right\|_2\leq\kappa$ instead. The parameter $\kappa$ was picked by cross-validation. Figure~\ref{fig.approximateReconstruction} shows a representative result which is encouraging. The matrix on the left is a $40\times 40$ distributed sparse covariance matrix and the matrix on the right is a reconstruction using $n = 2100$ sketches of size $m=21$ each.


\section{Preliminaries and Notation}
\label{sec.preliminaries}

We will begin our theoretical discussion by establishing some notation and preliminary concepts that will be used through the rest of the paper.
 
For any $p\in\mathbb{N}$, we define $[p]:=\left\{1,2,\ldots,p\right\}$. A {\bf graph} $G=(V,E)$ is defined in the usual sense as an ordered pair with the \emph{vertex set} $V$ and the \emph{edge set} $E$ which is a set of 2-element subsets of $V$, i.e., $E\subset {V\choose 2}$. Henceforth, unless otherwise stated, we will deal with the graph $G=([p],E)$. We also assume that all the graphs that we consider here include all the self loops, i.e., $\left\{i,i\right\}\in E$ for all $i\in [p]$. For any $S\subset[p]$, the set of neighbors $N(S)$ is defined as 
$$N(S) = \left\{j\in [p]: i\in S,\left\{i,j\right\}\in E \right\}.$$
For any vertex $i\in[p]$, the degree deg$(i)$ is defined as deg$(i) := \left|N(i)\right|$. 
\begin{definition}[Bounded degree graphs and regular graphs]
A graph $G=([p],E)$ is said to be a {\bf bounded degree graph} with (maximum) degree $d$ if for all $i\in [p]$, $$\mbox{deg}(i)\leq d$$
The graph is said to be {\bf $d-$regular} if deg$(i) = d$ for all $i\in [p]$. 
\end{definition}

We will be interested in another closely related combinatorial object. Given $p,m\in\mathbb{N}$, a {\bf bipartite graph} $G=([p],[m],E)$ is a graph with the \emph{left set} $[p]$ and \emph{right set} $[m]$ such that the edge set $E$ only has pairs $\left\{i,j\right\}$ where $i$ is the left set and $j$ is in the right set. A bipartite graph $G=([p],[m],E)$ is said to be {\bf $\delta-$left regular} if for all $i$ in the left set $[p]$, deg$(i) = \delta$. Given two sets $A\subset [p], B\subset [m]$, we define the set $$E\left(A:B\right):=\left\{(i,j)\in E: i\in A, j\in B\right\},$$
which we will find use for in our analysis. This set is sometimes known as the \emph{cut set}. Finally for a set $A\subset[p]$ (resp. $B\subset [m]$), we define $N(A) :=\{j\in[m]: i\in A, \{i,j\}\in E\}$ (resp.  $N_R(B) :=\{i\in[p]: j\in B, \{i,j\}\in E\}$ ). This distinction between $N$ and $N_R$ is made only to reinforce the meaning of the quantities which is otherwise clear in context.

\begin{definition}(Tensor graphs)
Given two bipartite graphs $G_1 = ([p],[m],E_1)$ and $G_2 = ([p],[m],E_2)$, we define their {\bf tensor graph} $G_1\otimes G_2$ to be the bipartite graph $([p]\times [p], [m]\times [m], E_1\otimes E_2)$ where $E_1\otimes E_2$ is such that $\left\{(i,i'),(j,j')\right\}\in E_1\otimes E_2$ if and only if $\left\{i,j\right\}\in E_1$ and $\left\{i',j'\right\}\in E_2$. 
\end{definition}
Notice that if the adjacency matrices of $G_1, G_2$ are given respectively by $A^T,B^T\in \mathbb{R}^{p\times m}$, then the adjacency matrix of $G_1\otimes G_2$ is $\left(A\otimes B\right)^T\in\mathbb{R}^{p^2\times m^2}$.

As mentioned earlier, we will be particularly interested in the situation where $B = A$. In this case, write the tensor product of a graph $G = ([p],[m],E)$ with itself as $G^\otimes = ([p]\times [p], [m]\times [m], E^\otimes)$. Here $E^\otimes$ is such that $\left\{(i,i'),(j,j')\right\}\in E^\otimes$ if and only if $\left\{i,j\right\}$ and $\left\{i',j'\right\}$ are in $E$. 
\\

Throughout this paper, we write $\|\cdot\|$ to denote norms of vectors. For instance, $\|x\|_1$ and $\|x\|_2$ respectively stand for the $\ell_1$ and $\ell_2$ norm of $x$. Furthermore, for a matrix $X$, we will often write $\|X\|$ to denote $\|\mbox{vec} (X)\|$  to avoid clutter. Therefore, the Frobenius norm of a matrix $X$ will appear in this paper as $\|X\|_2$.

\subsection{ Distributed Sparsity}
\label{sec.distributedSparisity}
As promised, we will now argue that distributed sparsity is important. Towards this end, let us turn our attention to Figure~\ref{fig.arrowVsDistributed} which shows two matrices with $\mathcal{O}(p)$ non-zeros. Suppose that the non-zero pattern in $X$ looks like that of the matrix on the left (which we dub as the ``arrow'' matrix). It is clear that it is impossible to recover this $X$ from $AX B^T$ \emph{even if} we know the non-zero pattern in advance. 
\begin{figure}[t!]
\centering
\includegraphics[scale=1.1]{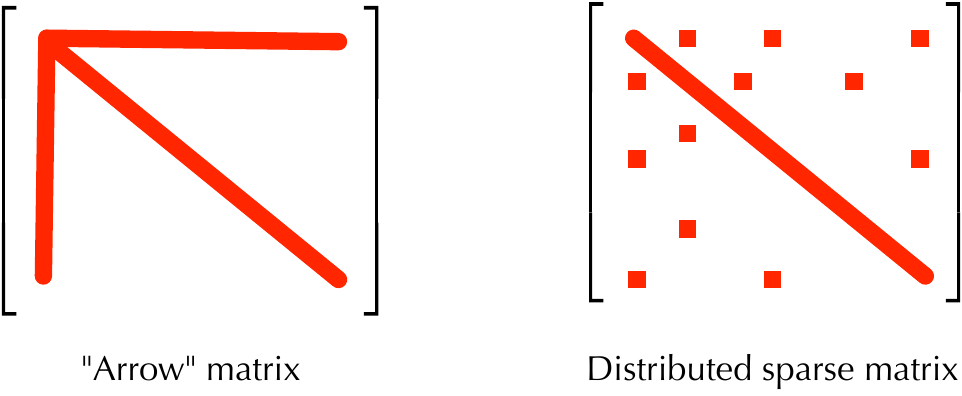}
\caption{Two matrices with $\mathcal{O}(p)$ non-zeros. The ``arrow'' matrix is impossible to recover by covariance sketching while the distributed sparse matrix is. }
\label{fig.arrowVsDistributed}
\end{figure}
For instance, if $v\in \mbox{ker}(A)$, then the matrix $\tilde{X}$, with $v$ added to the first column of $X$
is such that $AX B^T = A\tilde{X}B^T$ and $\tilde{X}$ is also an arrow matrix and hence indistinguishable from $X$. Similarly, one can ``hide'' a kernel vector of $B$ in the first row of the arrow matrix. In other words, it is impossible to uniquely recover $X$ from $A X B^T$. 

In what follows, we will show that if the sparsity pattern of $X$ is more \emph{distributed}, as in the right side of Figure~\ref{fig.arrowVsDistributed}, then one can recover $X$ and do so efficiently. In fact, our analysis will also reveal what that the size of the sketch $Y$ needs to be able to perform this task and we will see that this is very close to being optimal. 
\\
In order to make things concrete, we will now define these notions formally. \\
\begin{definition}[$d-$distributed sets and $d-$distributed sparse matrices]
\label{def.distributedSparse}
We say that a subset $\Omega\subset [p]\times [p]$ is {\bf $d-$distributed }if the following hold.
\begin{enumerate}
\item For $i = 1,2,\ldots,p$, $(i,i)\in \Omega$.
\item For all $k\in [p]$, the cardinality of the sets $\Omega_{k}:= \left\{(i,j)\in\Omega: i = k\right\}$ and $\Omega^{k}:= \left\{(i,j)\in\Omega: j = k\right\}$ is no more than $d$.
\end{enumerate}
The set of all $d-$distributed subsets of $[p]\times[p]$ will be denoted as $\mathfrak{W}_{p,d}$. We say that a matrix $X\in\mathbb{R}^{p\times p}$ is {\bf $d-$distributed sparse} if there exists an $\Omega\in\mathfrak{W}_{d,p}$ such that supp$(X):=\left\{(i,j)\in [p]\times [p]: X_{ij}\neq 0\right\}\subset \Omega$.
\end{definition}

While the theory we develop here is more generally applicable, the first point in the above definition makes the presentation easier. Notice that this forces the number of off-diagonal non-zeros in each row or column of a $d-$distributed sparse matrix $X$ to be at most $d-1$. This is not a serious limitation since a more careful analysis along these lines can only improve the bounds we obtain by at most a constant factor. 

\noindent{\bf Examples}:
\begin{itemize*}
\item Any diagonal matrix is $d-$distributed sparse for $d= 1$. Similarly a tridiagonal matrix is $d-$distributed sparse with $d=3$. 
\item The adjacency matrix of a bounded degree graph with maximum degree $d-1$ is $d-$distributed sparse
\item As shown in Proposition~\ref{prop.random_sparsity}, \emph{random sparse matrices} are $d-$distributed sparse with $d = \mathcal{O}(\log p)$. While we implicitly assume that $d$ is constant with respect to $p$ in what follows, all arguments work even if $d$ grows logarithmically in $p$ as is the case here. 
\end{itemize*} 

Given a matrix $X\in\mathbb{R}^{p\times p}$, as mentioned earlier, we write vec$(X)$ to denote the $\mathbb{R}^{p^2}$ vector obtained by stacking the columns of $X$. Suppose $x = \mbox{vec}(X)$. It will be useful in what follows to remember that $x$ was actually derived from the matrix $X$ and hence we will employ slight abuses of notation as follows: We will say $x$ is $d-$distributed sparse when we actually mean that the matrix $X$ is. Also, we will write $(i,j)$ to denote the index of $x$ corresponding to $X_{ij}$, i.e., 
$$x_{ij}:=x_{(i-1)p+j} = X_{ij}.$$

\noindent We finally make two remarks: 
\begin{itemize}
\item Even if $X$ were distributed sparse, the observed vector $Y = AX B^T$ is usually unstructured (and dense). 
\item While the results in this paper focus on the regime where the maximum number of non-zeros $d$ per row/column of $X$ is a constant (with respect to $p$), they can be readily extended to the case when $d$ grows poly-logarithmically in $p$. Extensions to more general scalings of $d$ is an interesting avenue for future work.
\end{itemize}


\subsection{Random Bipartite Graphs, Weak Distributed Expansion and the Choice of the Sketching Matrices}
\label{sec.mainLemma}
As alluded to earlier, we will choose the sensing matrices $A,B$ to be the adjacency matrices of certain random bipartite graphs. The precise definition of this notion follows. 
\begin{definition}[Uniformly Random $\delta-$left regular bipartite graph]
\label{def.randomGraph}
We say that $G = ([p],[m],E)$ is a { uniformly random $\delta-$left regular bipartite graph} if the edge set $E$ is a random variable with the following property: for each $i\in [p]$ one chooses $\delta$ vertices $j_1,j_2,\ldots,j_{\delta}$ chosen uniformly and independently at random (with replacement) from $[m]$ such that $\left\{\{i,j_k\}\right\}_{k=1}^\delta\subset E$.
\end{definition}
\noindent{\bf Remarks:\vspace{-2mm}}
\begin{itemize} 
\item Note that since we are sampling with replacement, it follows that the bipartite graph thus constructed may not be simple. If for instance there are two edges from the left node $i$ to the right node $i$, the corresponding entry $A_{ij}=2$.
\item It is in fact possible to work with a sampling without replacement model in Definition \ref{def.randomGraph} (where the resulting graph is indeed simple) and obtain qualititatively the same results. We work with a ``sampling with replacement'' model for the ease of exposition.
\end{itemize}
The probabilistic claims in this paper are made with respect to this probability distribution on the space of all bipartite graphs.

In past work \cite{berindeIndyk2008,journals/corr/abs-0902-4045}, the authors show that a random graph generated as above is, for suitable values of $\epsilon,\delta$, a $(k,\delta,\epsilon)-$expander. That is, for all sets $S\subset[p]$ such that $\left|S\right|\leq k$, the size of the neighborhood $\left|N(S)\right|$ is no less than $(1-\epsilon)\delta\left|S\right|$.  If $A$ is the adjacency matrix of such a graph, then it can then be shown that this implies that $\ell_1$ minimization would recover a $k-$sparse vector $x$ if one observes the sketch $Ax$ (actually, \cite{berindeIndyk2008} shows that these two properties are equivalent). Notice that in our context, the vector that we need to recover is $\mathcal{O}(p)$ sparse and therefore, our random graph needs to be a $(\mathcal{O}(p),\delta,\epsilon)-$expander. Unfortunately, this turns out to not be true of $G_1\otimes G_2$ when $G_1$ and $G_2$ are randomly chosen as directed above. 

However, we prove that if $G_1$ and $G_2$ are picked as in Definition~\ref{def.randomGraph}, then their tensor graph  $G_1\otimes G_2$, satisfies what can be considered a \emph{weak distributed expansion} property. This roughly says that the neighborhood of a $d-$distributed $\Omega\subset[p]\times [p]$ is large enough. Moreover, we show that this is in fact sufficient to prove that with high probability $X$ can be recovered from $AX B^T$  efficiently. The precise statement of these combinatorial claims follows. \\
\begin{lemma}
\label{lemma.main}
Suppose that $G_1 = ([p],[m],E_1)$ and $G_2 = ([p],[m],E_2)$ are two independent uniformly random $\delta-$left regular bipartite graphs with  $\delta = \mathcal{O}(\log p)$ and $m=\mathcal{O}(\sqrt{dp} \log p)$. Let $\Omega\in\mathfrak{W}_{d,p}$ be fixed. Then there exists an $\epsilon \in \left( 0,\frac{1}{4} \right)$ such that $G_1\otimes G_2$ has the following properties with probability exceeding $1-p^{-c}$, for some $c>0$. 
\begin{enumerate}
\item  $|N(\Omega)| \geq p \delta^2(1-\epsilon)$.
\item For any $(i,i') \in ([p] \times [p]) \setminus \Omega $ we have $|N(i,i')\cap N(\Omega)| \leq \epsilon \delta^2$.  
\item For any $(i,i') \in \Omega$, $|N(i,i')\cap N(\Omega \setminus (i,i'))| \leq \epsilon \delta^2$.  
\end{enumerate}
Moreover, all these claims continue to hold when $G_2$ is the same as $G_1$. 	
\end{lemma}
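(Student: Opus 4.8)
The plan is to reduce everything to statements about the independent random right-neighborhoods $N_1(i),N_2(i')\subseteq[m]$ (each a multiset of $\delta$ uniform draws), using that in the tensor graph $N(i,i')=N_1(i)\times N_2(i')$ and $N(S)=\bigcup_{(i,i')\in S}N_1(i)\times N_2(i')$. First I would dispose of a routine ``regularity'' step: writing $\delta=c\log p$ and $m=C\sqrt{dp}\log p$ with $C,c$ to be fixed at the end, Chernoff bounds and a union bound over $i\in[p]$ (and over pairs $i\ne j$) show that with probability at least $1-p^{-c}$ every $N_1(i),N_2(i)$ has at least $(1-\epsilon_0)\delta$ distinct elements, every right vertex has degree at most $2p\delta/m$, and $\max_{i\ne j}|N_1(i)\cap N_1(j)|\le t_0$ for an absolute constant $t_0$; the last estimate uses $\mathbf{P}(|N_1(i)\cap N_1(j)|\ge t)=\mathcal{O}((\delta^2/m)^t)$ together with $\log(m/\delta^2)=\Theta(\log p)$, so a constant $t_0$ suffices to beat the $p^2$ union bound. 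I condition on this event throughout.

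For part (1) I would use only the diagonal $D=\{(i,i):i\in[p]\}\subseteq\Omega$, so it suffices to prove $|N(D)|\ge(1-\epsilon)p\delta^2$. Write $|N(D)|=\sum_i|N_1(i)\times N_2(i)|-\mathrm{exc}$ with $\mathrm{exc}\ge0$ the over-count; the regularity step gives $\sum_i|N_1(i)\times N_2(i)|\ge(1-\epsilon_0)^2 p\delta^2$, so I only need $\mathrm{exc}\le\epsilon'p\delta^2$. When $G_1,G_2$ are independent this follows from $\mathrm{exc}\le\sum_{i<j}|N_1(i)\cap N_1(j)|\,|N_2(i)\cap N_2(j)|$, whose expectation is $\binom p2(\delta^2/m)^2=\Theta(p\delta^2/(C^2d))$, small for large $C$. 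The case $G_2=G_1$ is more delicate, since that bound degenerates to $\sum_{i<j}|N_1(i)\cap N_1(j)|^2$, of expectation $\Theta(p^2\delta^2/m)\gg p\delta^2$, so naive inclusion--exclusion is useless. Instead I would write $\mathrm{exc}=(\sum_i|N_1(i)|-|\{a:N_R(a)\ne\emptyset\}|)+(\text{off-diagonal over-count})$, where $N_R(a):=\{i:a\in N_1(i)\}$; the first term is at most $p\delta=o(p\delta^2)$, and the off-diagonal over-count is at most $\sum_{i<j}|N_1(i)\cap N_1(j)|\,(|N_1(i)\cap N_1(j)|-1)$, again of expectation $\Theta(p\delta^2/(C^2d))$ --- the extra ``$-1$'' is exactly what removes the divergent ``diagonal of $[m]\times[m]$'' contribution. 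In both cases a concentration bound (discussed below) upgrades ``small in expectation'' to ``small with probability $\ge1-p^{-c}$''.

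For parts (2) and (3), fix $(i,i')$ and set $\Omega'=\Omega$ (if $(i,i')\notin\Omega$) or $\Omega'=\Omega\setminus\{(i,i')\}$, so that $(i,i')\notin\Omega'$ and $\Omega'\in\mathfrak{W}_{d,p}$ in both cases; then
\[
|N(i,i')\cap N(\Omega')|\ \le\ \sum_{(j,j')\in\Omega'}|N_1(i)\cap N_1(j)|\,|N_2(i')\cap N_2(j')|,
\]
which I would split according to whether $j=i$, $j'=i'$, or neither. The $j=i$ terms ($\le d$ of them, all with $j'\ne i'$) contribute at most $\delta\sum_{j'}|N_2(i')\cap N_2(j')|$, and since $N_2(i')$ is independent of those $\le d$ neighborhoods the inner sum is a binomial of mean $\le d\delta^2/m=o(1)$, hence $\mathcal{O}(1)$ with probability $\ge1-p^{-c}$ (a small-mean Chernoff bound of the form $(e\mu/t)^t$ with $t$ a large constant); so this part is $o(\delta^2)$, and symmetrically for $j'=i'$. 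The ``neither'' part is the main term; its expectation is $\le|\Omega'|(\delta^2/m)^2\le dp\delta^4/m^2=\Theta(\delta^2/C^2)\le\epsilon\delta^2/4$. For its concentration I would condition on $G_1$ and on $N_2(i')$ and rewrite it as $\sum_{b\in N_2(i')}\sum_{j'\ne i'}w_{j'}\,\mathbf{1}[b\in N_2(j')]$ with $G_1$-measurable weights $w_{j'}:=\sum_{j\ne i:\,(j,j')\in\Omega'}|N_1(i)\cap N_1(j)|$ obeying $0\le w_{j'}\le d\,t_0$ and $\sum_{j'}w_{j'}\le 2dp\delta^2/m$; for each $b$ this is a sum of independent bounded nonnegative variables, so Bernstein's inequality (variance proxy at most $d t_0$ times the mean, the mean being $\le\epsilon\delta/8$ for large $C$) bounds it by $\epsilon\delta/4$ with probability $\ge1-p^{-c}$ --- the point being that the exponent is of order $\delta/d$, i.e.\ $\Theta(\log p)$ when $d=\mathcal{O}(1)$. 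Summing over the $\le\delta$ values of $b$ bounds the main term by $\epsilon\delta^2/4$, whence $|N(i,i')\cap N(\Omega')|\le\epsilon\delta^2$.

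Finally I would collect the failure probabilities, union bound over the $\le p^2$ pairs $(i,i')$ (and over $i$ in the regularity step), and choose $C,c$ (hence $\epsilon,\epsilon_0,t_0$) large enough to get probability $\ge1-p^{-c}$. I expect the real obstacle to be the two concentration statements above --- the over-count in (1) and the ``neither'' term in (2)--(3) --- because these quantities are \emph{bilinear} in the randomness of the two graphs (a direct consequence of the tensor structure), hence not sums of independent variables, and a plain Markov/union argument loses polynomial factors; the conditioning-then-Bernstein device, together with $\delta=\Theta(\log p)$ to manufacture $\Theta(\log p)$-size exponents, is what lets everything survive the final union bound. The case $G_2=G_1$ is a second, genuine subtlety: in (1) it forces the diagonal peeling above, and in (2)--(3) one instead conditions on $N_1(i),N_1(i')$ and separates the $j=j'$ pairs (few, since the relevant index sets have small expected intersection) --- but the resulting estimates are of the same order.
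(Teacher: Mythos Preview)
Your plan is sound and would give the lemma, but it is \emph{genuinely different} from the paper's argument, so a short comparison is worth recording.

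The paper never introduces a ``regularity'' preprocessing step, never case-splits on $j=i$ versus $j'=i'$ versus ``neither'', and never uses Bernstein with conditioning.  Instead, for each of the three parts it writes the relevant quantity as a sum of $\{0,1\}$-valued indicators (e.g.\ $\chi_{jj'}=\mathbf{1}\{(j,j')\in N(\mathcal D)\}$ in Part~1, or $\chi_k=\mathbf{1}\{\text{$k$-th edge of $(i,i')$ lands in }N(\Omega)\}$ in Parts~2--3), observes that each summand depends on at most $\Theta(m)$ (resp.\ $\Theta(\delta)$) other summands, and then invokes the Hajnal--Szemer\'edi equitable coloring theorem to partition the index set into $\Theta(m)$ (resp.\ $\Theta(\delta)$) color classes of size $\Theta(m)$ (resp.\ $\Theta(\delta)$) on each of which the summands are independent; Chernoff on each class plus a union bound over classes finishes.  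This single device is applied uniformly to all three parts and to the $G_2=G_1$ case.  What your approach buys is that it is more elementary --- no appeal to Hajnal--Szemer\'edi --- and it makes the bilinear-in-two-graphs structure explicit via the conditioning; the price is more bookkeeping (the regularity event, the three-way case split, the separate diagonal/off-diagonal treatment when $G_2=G_1$).  What the paper's approach buys is a cleaner, more uniform argument that treats the diagonal case $G_2=G_1$ without any special pleading, at the cost of importing a nontrivial extremal-combinatorics theorem.

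One place in your sketch deserves a caveat: you defer the concentration of the Part~1 over-count to ``a concentration bound (discussed below)'', but the technique you then describe --- condition on $G_1$, apply Bernstein in $G_2$ --- exploits bilinearity in two independent sources of randomness, which is exactly what \emph{fails} in the $G_2=G_1$ case you are most concerned with (the over-count $\sum_{i<j}X_{ij}(X_{ij}-1)$ is quadratic in a single graph).  This is fixable --- for instance, condition on $N_1(i)$ and note that the summands over $j\neq i$ become independent, then sum over $i$ --- but it is a separate argument from the one you wrote for Parts~2--3, and you should flag it as such rather than folding it into ``discussed below''.
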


\noindent{\bf Remarks:\vspace{-2mm}}
\begin{itemize}
\item Part 1 of Lemma~\ref{lemma.main} says that if $\Omega$ is a $d-$distributed set, then the size of the neghborhood of $\Omega$ is large. This can be considered a \emph{weak distributed expansion property}. Notice that while it is reminiscent of the \emph{vertex expansion} property of expander graphs, it is easy to see that it does not hold if $\Omega$ is not distributed. Furthermore, we call it ``weak'' because the lower bound on the size of the neighborhood is $\delta^2p(1-\epsilon)$ as opposed to $\delta^2\left|\Omega\right|(1-\epsilon) = \delta^2dp(1-\epsilon)$ as one typically gets for standard expander graphs. It will become clear that this is one of the key combinatorial facts that ensures that the necessary theoretical guarantees hold for covariance sketching.

\item Parts 2 and 3 say that the number of collisions between the edges emanating out of a single vertex with the edges emanating out of a distributed set is small. Again, this combinatorial property is crucial for the proof of Theorem~\ref{thm.main} to work. 
\end{itemize}

As stated earlier, we are particularly interested in the challenging case when $G_1 = G_2$ (or equivalently, their adjacency matrices $A,B$ are the same). The difficultly, loosely speaking, stems from the fact that since we are not allowed to pick $G_1$ and $G_2$ separately, we have much less independence. In Appendix~\ref{appendix:A}, we will only prove Lemma~\ref{lemma.main} in the case when $G_1=G_2$ since this proof can be modified in a straightforward manner to obtain the proof of the case when $G_1$ and $G_2$ are drawn independently.


\section{Proofs of Main Results}
\label{sec.proof}
In this section, we will prove the main theorem. To reduce clutter in our presentation, we will sometimes employ certain notational shortcuts. When the context is clear, the ordered pair $(i,i')$ will simply be written as $ii'$ and the set $[p]\times [p]$ will be written as $[p]^2$.  Sometimes, we will also write $\cA$ to mean $A\otimes A$ and if $S\subset[p]\times [p]$, we write $(A\otimes A)_S$ or $\cA_S$ to mean the submatrix of $A\otimes A$ obtained by appending the columns $\{A_i\otimes A_j\mid (i,j)\in S\}$.

As stated earlier, we will only provide the proof of Theorem~\ref{thm.main} for the case when $A=B$. Some straightforward changes to the proof presented here readily gives one the proof for the case when the matrices $A$ and $B$ are distinct. 
\subsection{Proof of Theorem~\ref{thm.main}}

We will consider an arbitrary ordering of the set $[p]\times [p]$ and we will order the edges in $E^\otimes$ lexicographically based on this ordering, i.e., the first $\delta^2$ edges $e_1,\ldots,e_{\delta^2}$ in $E^\otimes$ are those that correspond to the first element as per the ordering on $[p]\times [p]$ and so on. Now, one can imagine that the graph $G^\otimes$ is formed by including these edges sequentially as per the ordering on the edges. This allows us to partition the edge set into the set $E_1^\otimes$ of edges that do not collide with any of the previous edges as per the ordering and the set $E_2^\otimes:=E^\otimes - E_1^\otimes$. (We note that a similar proof technique was adopted in Berinde et al. \cite{berindeIndyk2008}). 

As a first step towards proving the main theorem, we will show that the operator $A\otimes A$ preserves the $\ell_1$ norm of a matrix $X$ as long as $X$ is distributed sparse. Berinde et al.,\cite{berindeIndyk2008} call a similar property RIP-1, taking cues from the restricted isometry property that has become popular in literature \cite{candes2005decoding}. The proposition below can also be considered to be a restricted isometry property but the operator in our case is only constrained to behave like an isometry for distributed sparse vectors. 
\begin{proposition}[$\ell_1$-RIP]
\label{prop.l1rip}
Suppose $X\in\mathbb{R}^{p\times p}$ is $d-$distributed sparse and  $A$ is the adjacency matrix of a random bipartite $\delta-$left regular graph. Then there exists an $\epsilon>0$ such that
\begin{equation}
(1-2\epsilon)\delta^2\left\|X\right\|_1\leq\left\|AXA^T\right\|_1\leq \delta^2\left\|X\right\|_1,
\end{equation}
with probability exceeding $1-p^{-c}$ for some $c>0$.
\end{proposition}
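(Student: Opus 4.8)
The upper bound is deterministic and requires no structure on $X$ at all. Since $A$ is $\delta$-left regular, every column of $A$ sums (with multiplicity) to $\delta$, so every column $A_i\otimes A_{i'}$ of $\cA:=A\otimes A$ satisfies $\|A_i\otimes A_{i'}\|_1=\|A_i\|_1\,\|A_{i'}\|_1=\delta^2$; consequently $\|\cA x\|_1\le\big(\max_{(i,i')}\|A_i\otimes A_{i'}\|_1\big)\|x\|_1=\delta^2\|x\|_1$ for $x=\mathrm{vec}(X)$. So the only real content is the lower bound, and for this I would transport the RIP-1 argument of Berinde et al.~\cite{berindeIndyk2008} to the tensor graph $G^\otimes$ whose (transposed) adjacency matrix is $\cA$.

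Fix a $d$-distributed $\Omega\supseteq\mathrm{supp}(X)$ and expand $\cA x=\sum_{(i,i')\in\Omega}x_{ii'}(A_i\otimes A_{i'})$, so that the coordinate of $\cA x$ indexed by a right vertex $(r,r')\in[m]^2$ equals $\sum_{e}x_{\mathrm{left}(e)}$, the sum running (with multiplicity) over edges $e$ of $G^\otimes$ incident to $\Omega$ that hit $(r,r')$. Using the lexicographic ordering of $E^\otimes$ and the partition of the $\Omega$-incident edges into a non-colliding part $E_1^\otimes$ (the first edge hitting each reached right vertex) and a colliding part $E_2^\otimes$, introduced in Section~\ref{sec.proof}, I would apply the triangle inequality $|\sum_e x_{\mathrm{left}(e)}|\ge|x_{\mathrm{left}(e_1)}|-\sum_{e\ne e_1}|x_{\mathrm{left}(e)}|$ one right vertex at a time, where $e_1$ is the $E_1^\otimes$-edge. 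Summing over $(r,r')$ gives
\[
\|\cA x\|_1\;\ge\;\sum_{e\in E_1^\otimes}|x_{\mathrm{left}(e)}|\;-\;\sum_{e\in E_2^\otimes}|x_{\mathrm{left}(e)}|.
\]
Since each $(i,i')\in\Omega$ is incident to exactly $\delta^2$ edges of $G^\otimes$ (counted with multiplicity), $\sum_{e}|x_{\mathrm{left}(e)}|=\delta^2\|x\|_1$ over all $\Omega$-incident edges, so
\[
\|\cA x\|_1\;\ge\;\delta^2\|x\|_1\;-\;2\sum_{e\in E_2^\otimes}|x_{\mathrm{left}(e)}| .
\]
It therefore suffices to show the ``colliding mass'' $\sum_{e\in E_2^\otimes}|x_{\mathrm{left}(e)}|$ is at most $\epsilon\delta^2\|x\|_1$, which yields exactly the claimed $(1-2\epsilon)\delta^2\|x\|_1$.

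To bound the colliding mass I would count, for each fixed $(i,i')\in\Omega$, the number of $E_2^\otimes$-edges incident to it and then sum the estimate weighted by $|x_{ii'}|$. An incident colliding edge must land on a right vertex that is either (a) a neighbour of some other vertex of $\Omega$, or (b) already hit by an earlier edge of $(i,i')$'s own star $A_i\otimes A_{i'}$. Case (a) is exactly what Lemma~\ref{lemma.main}(3) controls: $|N(i,i')\cap N(\Omega\setminus(i,i'))|\le\epsilon\delta^2$, and since in the regime $\delta=\mathcal{O}(\log p)$, $m=\mathcal{O}(\sqrt{dp}\log p)$ each such shared right vertex absorbs only $O(1)$ parallel edges from $(i,i')$, this contributes $O(\epsilon\delta^2)$ edges. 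Case (b) is negligible: a star of $\delta^2$ edges with $\delta^2\ll m^2$ almost never self-collides, by an elementary birthday-type estimate. Summing over $(i,i')\in\Omega$ gives $\sum_{e\in E_2^\otimes}|x_{\mathrm{left}(e)}|\le\epsilon\delta^2\|x\|_1$ after renaming $\epsilon$ (Lemma~\ref{lemma.main}(1) can alternatively be invoked to get a matching aggregate bound on the total number of collisions), and the probability $1-p^{-c}$ is inherited verbatim from Lemma~\ref{lemma.main}.

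\textbf{Main obstacle.} All the difficulty is concentrated in the colliding-edge estimate. In $G^\otimes$ one is working with a genuine multigraph, and in the hard case $B=A$ the two copies of $A$ are the \emph{same} random object, so the classical independence- and expansion-based arguments fail; this is precisely why Lemma~\ref{lemma.main}, with its per-vertex collision bound~(3) and weak-expansion bound~(1), is phrased the way it is. Granting that lemma, the remainder of the proof of Proposition~\ref{prop.l1rip} is the routine Berinde--Indyk bookkeeping outlined above.
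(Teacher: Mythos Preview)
Your proposal is correct and follows essentially the same route as the paper: the deterministic upper bound via the column-sum of $A\otimes A$, the Berinde--Indyk $E_1^\otimes/E_2^\otimes$ split yielding $\|\cA x\|_1\ge\delta^2\|x\|_1-2\sum_{e\in E_2^\otimes}|x_{\mathrm{left}(e)}|$, and the per-vertex collision bound from Lemma~\ref{lemma.main}(3) to control the colliding mass. The paper invokes part~(3) directly to conclude $r_{ii'}\le\epsilon\delta^2$ without your explicit (a)/(b) case split, and note that your parenthetical about Lemma~\ref{lemma.main}(1) is not a true alternative here---an aggregate bound on the total number of collisions does not by itself control the \emph{weighted} sum $\sum_{ii'}r_{ii'}|x_{ii'}|$, so the per-vertex bound from part~(3) is genuinely needed.
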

\begin{proof}
The upper bound follows (deterministically) from the fact that the induced (matrix) $\ell_1$-norm of $A\otimes A$, i.e., the maximum column sum of $A\otimes A$, is precisely $\delta^2$. To prove the lower bound, we need the following lemma.
\begin{lemma}
\label{lemma.towardsL1rip}
For any $X\in\mathbb{R}^{p\times p}$, 
\begin{align}
\left\|AXA^T\right\|_1\geq\delta^2\left\|X\right\|_1 - 2\sum_{jj'\in[m]^2}\sum_{ii'\in [p]^2}{\mathbf{1}_{\{ii',jj'\}\in E^\otimes_2}}|X_{ii'}|\label{eq.lemma1}
\end{align}
\end{lemma}
\begin{proof} In what follows, we will denote the indicator function $\mathbf{1}_{\{ii',jj'\}\in S}$ by $\mathbf{1}_S^{\{ii'jj'\}}$. We begin by observing that
\begin{align}
   \left\| AXA^T \right\|_1
   & = \left\|\mathcal{A}\,\mbox{vec}(X)\right\|_1\notag\\
   & = \sum_{j j'\in[m]^2} \left| \sum_{ii'\in[p]^2} \mathcal{A}_{\left\{ii'jj'\right\}}X_{i i'} \right|\notag\\
   & = \sum_{j j'\in[m]^2} \left| \sum_{ii'\in[p]^2}  \mathbf{1}^{\{ii'jj'\}}_{E^\otimes}\hspace{1mm}X_{i i'} \right|\notag\\
 & =  \sum_{jj'\in[m]^2} \left| \sum_{ii'\in[p]^2}\mathbf{1}^{\{ii'jj'\}}_{{E}^{\otimes}_1} X_{i i'} +\sum_{ii'\in[p]^2}\mathbf{1}^{\{ii'jj'\}}_{{E}^{\otimes}_2} X_{i i'}\right|\notag\\
& \geq \sum_{jj'\in[m]^2} \left| \sum_{ii'\in[p]^2}\mathbf{1}^{\{ii'jj'\}}_{{E}^\otimes_1} X_{i i'} \right| - \left| \sum_{ii'\in[p]^2}\mathbf{1}^{\{ii'jj'\}}_{{E}^{\otimes}_2}X_{i i'}\right|\notag\\
 & \stackrel{(a)}{\geq}  \sum_{jj'\in[m]^2}\Big( \sum_{ii'\in[p]^2}\mathbf{1}^{\{ii'jj'\}}_{{E}^\otimes_1}\left| X_{i i'} \right|- \sum_{ii'\in[p]^2}\mathbf{1}^{\{ii'jj'\}}_{E^{\otimes}_2} \left| X_{i i'} \right|\Big)\notag\\
 & = \sum_{ii'\in[p]^2,jj'\in[m]^2} \mathbf{1}^{\{ii'jj'\}}_{E^\otimes} \left| X_{i i'} \right|-2\sum_{ii'\in[p]^2jj'\in[m]^2}\mathbf{1}^{\{ii'jj'\}}_{{E}^{\otimes}_2} \left| X_{i i'} \right|\notag,
\end{align}
where $(a)$ follows after observing that the first (double) sum has only one term and applying triangle inequality to the second sum. Since \\$\sum_{ii'\in[p]^2,jj'\in[m]^2} \mathbf{1}^{\{ii'jj'\}}_{E^\otimes}\left| X_{i i'} \right| = \delta^2\left\|X\right\|_1$, this concludes the proof of the lemma. 
\end{proof}
Now, to complete the proof of Proposition \ref{prop.l1rip}, we need to bound the sum in the LHS of \eqref{eq.lemma1}. 
Notice that $$\sum_{ii'jj': \left( i i' j j'\right) \in{E}^{\otimes}_2} \left| X_{i i'} \right| = \sum_{ii'}\left|X_{ii'}\right|r_{ii'} = \sum_{ii'\in\Omega}\left|X_{ii'}\right|r_{ii'}.$$
{where $r_{ii'}$ is the number of collisions of edges emanating from $ii'$ with all the previous edges as per the ordering we defined earlier}. Since $\Omega$ is $d-$distributed, from the third part of Lemma \ref{lemma.main}, we have that for all $ii'\in\Omega$, $r_{ii'}\leq \epsilon\delta^2$ with probability exceeding $1-p^{-c}$ and therefore, 
$$\sum_{ii'\in\Omega}\left|X_{ii'}\right|r_{ii'}\leq\epsilon\delta^2\left\|X\right\|_1.$$
This concludes the proof. 
\end{proof}
Next, we will use the fact that $A\otimes A$ behaves as an approximate isometry (in the $\ell_1$ norm) to prove what can be considered a nullspace property \cite{Donoho2001,Cohen2009}. This will tell us that the nullspace of $A\otimes A$ is ``smooth'' with respect to distributed support sets and hence $\ell_1$ minimization as proposed in \eqref{eq.optP1} will find the right solution.

\begin{proposition}[Nullspace Property]
\label{prop.nsp}
Suppose that $A \in \left\{0,1\right\}^{m \times p}$ is the adjacency matrix of a random bipartite $\delta-$left regular graph with $\delta=\mathcal{O}(\log p)$ and ${m=\mathcal{O}(\sqrt{dp}\log p)}$ and that $\Omega\in\mathfrak{W}_{d,p}$ is fixed. Then, with probability exceeding $1-p^{-c}$, for any $V\in \mathbb{R}^{p\times p}$ such that $AVA^T = 0$, we have
\begin{equation}
\left\|V_\Omega\right\|_1\leq {\frac{\epsilon}{1-3\epsilon}}\left\|V_{\Omega^c}\right\|_1.
\end{equation}
for some $\epsilon\in (0,\frac{1}{4})$ and for some $c>0$. 
\end{proposition}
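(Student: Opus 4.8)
The plan is to derive the nullspace property directly from the $\ell_1$-RIP of Proposition~\ref{prop.l1rip} together with part~2 of Lemma~\ref{lemma.main}, mimicking the standard expander/RIP-1 argument of Berinde et al.\ but adapted to the weak distributed expansion available here. Fix $V$ with $AVA^T=0$, i.e., $\mathcal{A}\,\mathrm{vec}(V)=0$. The key is to exploit that the edge set $E^\otimes$ decomposes into the ``collision-free'' part $E_1^\otimes$ and the ``collision'' part $E_2^\otimes$, and that by Lemma~\ref{lemma.main} (parts 2 and 3) the number of collisions charged to any single coordinate $(i,i')$ is at most $\epsilon\delta^2$, whether or not $(i,i')\in\Omega$.

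First I would write, for each right-node pair $jj'\in[m]^2$, the identity $0 = \sum_{jj'}\big|\sum_{ii'}\mathbf{1}^{\{ii'jj'\}}_{E^\otimes}V_{ii'}\big|$ and split the inner sum according to whether $(ii',jj')\in E_1^\otimes$ or $E_2^\otimes$. Since $AVA^T=0$, applying the reverse triangle inequality to the split gives
\begin{equation}
0 \;\geq\; \sum_{jj'}\Big(\sum_{ii'}\mathbf{1}^{\{ii'jj'\}}_{E_1^\otimes}|V_{ii'}| \;-\; \sum_{ii'}\mathbf{1}^{\{ii'jj'\}}_{E_2^\otimes}|V_{ii'}|\Big),\notag
\end{equation}
which after the computation in Lemma~\ref{lemma.towardsL1rip} rearranges to a bound of the form $\delta^2\|V\|_1 \le 2\sum_{ii'\in[p]^2}|V_{ii'}|\,r_{ii'}$, where $r_{ii'}$ counts collisions of edges out of $ii'$ with earlier edges. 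Now I split the coordinate sum over $\Omega$ and $\Omega^c$. For $ii'\in\Omega$ we use part~3 of Lemma~\ref{lemma.main}: $r_{ii'}\le\epsilon\delta^2$. For $ii'\in\Omega^c$ I want to bound $\sum_{ii'\in\Omega^c}|V_{ii'}|r_{ii'}$; here $r_{ii'}$ could in principle be as large as $\delta^2$, so I will instead only use the crude bound $r_{ii'}\le\delta^2$ for the $\Omega^c$ part, which is harmless because that term will land on the ``good'' side of the final inequality anyway. This yields $\delta^2\|V\|_1 \le 2\epsilon\delta^2\|V_\Omega\|_1 + 2\delta^2\|V_{\Omega^c}\|_1$, i.e.\ $\|V\|_1 \le 2\epsilon\|V_\Omega\|_1 + 2\|V_{\Omega^c}\|_1$. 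Since $\|V\|_1 = \|V_\Omega\|_1 + \|V_{\Omega^c}\|_1$, rearranging gives $(1-2\epsilon)\|V_\Omega\|_1 \le \|V_{\Omega^c}\|_1$. To match the stated constant $\frac{\epsilon}{1-3\epsilon}$ I expect one needs a slightly sharper accounting — e.g.\ keeping the $E_1^\otimes$ contribution of the $\Omega^c$ coordinates on the left-hand side rather than discarding it, so that the $\|V_{\Omega^c}\|_1$ that appears is really $(\delta^2 - r_{ii'})$-weighted, and using $r_{ii'}\le\epsilon\delta^2$ on $\Omega$ more carefully together with the first-term-has-one-term observation — which should tighten $2\epsilon$ and $2$ to the claimed ratio.

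The step I expect to be the main obstacle is getting the constants exactly right: the naive split above loses a factor and produces $\frac{1-2\epsilon}{1}$-type bounds rather than $\frac{\epsilon}{1-3\epsilon}$, so I would need to be careful about which collision terms are charged to $\Omega$ versus $\Omega^c$ and to avoid double-counting the collision edges (each collision edge in $E_2^\otimes$ should be charged to exactly one endpoint-coordinate). The combinatorial inputs (parts 2 and 3 of Lemma~\ref{lemma.main}, and Proposition~\ref{prop.l1rip}) all hold simultaneously on an event of probability $\ge 1-p^{-c}$ for a fixed $\Omega$, so no union bound over supports is needed and the probabilistic part is immediate; the entire content is the deterministic inequality chasing on that event.
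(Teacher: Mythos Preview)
Your proposal has a genuine gap and misses the key structural idea.

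The central problem is in applying Lemma~\ref{lemma.towardsL1rip} to the full nullspace element $V$. The collision count $r_{ii'}$ appearing there is the number of collisions of edges out of $ii'$ with \emph{all} earlier edges in the ordering of $[p]^2$. But part~3 of Lemma~\ref{lemma.main} only bounds $|N(i,i')\cap N(\Omega\setminus(i,i'))|$, i.e.\ collisions with edges emanating from the fixed $d$-distributed set $\Omega$. When $V$ is not supported on $\Omega$, the edges contributing to the $E_2^\otimes$ sum come from all of $[p]^2$, and Lemma~\ref{lemma.main} says nothing about collision counts relative to the full vertex set. So the step ``for $ii'\in\Omega$, $r_{ii'}\le\epsilon\delta^2$'' is unjustified; this is not a bookkeeping issue but a mismatch between the combinatorial input and the quantity you need. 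And even if it went through, the inequality $(1-2\epsilon)\|V_\Omega\|_1\le\|V_{\Omega^c}\|_1$ you obtain gives a ratio $\tfrac{1}{1-2\epsilon}>1$, which is too weak to yield $\|V_\Omega\|_1<\|V_{\Omega^c}\|_1$ and hence cannot drive the proof of Theorem~\ref{thm.main}.

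The paper's proof uses a different decomposition. Instead of splitting edges into $E_1^\otimes$ and $E_2^\otimes$, it restricts the \emph{row} index to $N(\Omega)\subset[m]^2$ and writes
\[
0=\bigl\|(A\otimes A)^{N(\Omega)}v\bigr\|_1 \ \ge\ \bigl\|(A\otimes A)^{N(\Omega)}v_\Omega\bigr\|_1-\bigl\|(A\otimes A)^{N(\Omega)}v_{\Omega^c}\bigr\|_1.
\]
The first term equals $\|(A\otimes A)v_\Omega\|_1$ because every edge out of $\Omega$ lands in $N(\Omega)$; and since $V_\Omega$ \emph{is} supported on $\Omega$, Proposition~\ref{prop.l1rip} applies directly to give the lower bound $(1-2\epsilon)\delta^2\|V_\Omega\|_1$. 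For the second term, each $ii'\in\Omega^c$ contributes $|E^\otimes(ii':N(\Omega))|\cdot|V_{ii'}|$, and now part~2 of Lemma~\ref{lemma.main} is exactly the right input, bounding this count by $\epsilon\delta^2$. One obtains $(1-2\epsilon)\|V_\Omega\|_1\le\epsilon\|V\|_1$, and splitting $\|V\|_1=\|V_\Omega\|_1+\|V_{\Omega^c}\|_1$ delivers the stated constant $\tfrac{\epsilon}{1-3\epsilon}$. The restriction to rows in $N(\Omega)$ is the missing idea: it is what allows the $\ell_1$-RIP to be applied to $V_\Omega$ alone and what makes part~2 of Lemma~\ref{lemma.main} the correct bound on the cross term.
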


\begin{proof}
Let $V$ be any symmetric matrix such that $AVA^T=0$. Let ${v=\text{vec(V)}}$ and note that vec$\left(AVA^T\right)=\left(A \otimes A\right) v=0$. Let $\Omega$ be a $d$-distributed set. As indicated in Section~\ref{sec.preliminaries}, we define $N(\Omega) \subseteq [m]^2$ to be the set of neighbors of $\Omega$ with respect to the graph $G^\otimes$. Let $(A\otimes A)^{N(\Omega)}$ denote the submatrix of $A\otimes A$ that contains only those rows corresponding to $N(\Omega)$ (and all columns).  We will slightly abuse notation and use $v_\Omega$ to denote the vectorization of the projection of $V$ onto the set $\Omega$, i.e., $v_\Omega =\mbox{vec}(V_\Omega)$, where $$[V_\Omega]_{i,j}=\begin{cases}
V_{i,j} & (i,j)\in\Omega\\
0 & \mbox{otherwise}
\end{cases}.$$ Now, we can follow the following chain of inequalities:
\begin{align}
0&=\left\|(A\otimes A)^{N(\Omega)} v\right\|_1 \notag\\
&= \left\|(A\otimes A)^{N(\Omega)} (v_\Omega + v_{\Omega^c})\right\|_1 \notag\\
&\geq \left\|(A\otimes A)^{N(\Omega)} v_\Omega\right\|_1 -\left\|(A\otimes A)^{N(\Omega)} v_{\Omega^c}\right\|_1 \notag\\
& = \left\|(A \otimes A) v_\Omega\right\|_1 -\left\|(A\otimes A)^{N(\Omega)} v_{\Omega^c}\right\|_1\notag\\
& \geq (1-2\epsilon) \delta^2 \left\|V_\Omega\right\|_1 - \left\|(A\otimes A)^{N(\Omega)} v_{\Omega^c}\right\|_1\notag,
\end{align}
where the last inequality follows from Proposition \ref{prop.l1rip}.
Resuming the chain of inequalities, we have:
\begin{align}
0 &\geq (1-2\epsilon) \delta^2 \left\|V_\Omega\right\|_1 - \sum_{ii'\in\Omega^c}\left\|(A \otimes A)^{N(\Omega)} v_{\{ii'\}}\right\|_1 \notag\\
&\geq  (1-2\epsilon) \delta^2 \left\|V_\Omega\right\|_1 - \sum_{\substack{ii',jj': (ii',jj') \in E^\otimes, \\ jj' \in N(\Omega),ii'\in\Omega^c}} |V_{ii'}| \notag\\
& =   (1-2\epsilon) \delta^2 \left\|V_\Omega\right\|_1 - \sum_{ii'\in\Omega^c}{|E^\otimes (ii':N(\Omega))|}\; |V_{ii'}| \notag\\
& \stackrel{(a)}{\geq}   (1-2\epsilon) \delta^2 \left\|V_\Omega\right\|_1 - \sum_{ii'\in\Omega^c}{\epsilon  \delta^2 }\; \left|V_{ii'}\right|\notag \\
& \geq   (1-2\epsilon) \delta^2 \left\|V_\Omega\right\|_1 -  {\epsilon} \delta^2\left\|V\right\|_1\notag,
\end{align}
where $(a)$ follows from the second part of Lemma \ref{lemma.main}. Writing $\left\|V\right\|_1=\left\|V_{\Omega}\right\|_1~+~\left\|V_{\Omega^c}\right\|_1$ and rearranging, we get the required result.
\end{proof}
\noindent Now, we can use this to prove our main theorem.
\begin{proof}[Proof of Theorem \ref{thm.main}]
Let $\Omega$ be the support of $X$ and notice that $\Omega$ is $d-$distributed. Now, suppose that there exists an $\tilde{X}\neq X$ such that $A\tilde{X}A^T = Y$. 
Observe that $A\left(\tilde{X}-X\right)A^T = 0$. Now, consider 
\begin{align}
\left\|X\right\|_1&\leq\left\|X-\tilde{X}_\Omega\right\|_1+\left\|\tilde{X}_\Omega\right\|_1\notag\\
&= \left\|\left(X-\tilde{X}\right)_\Omega\right\|_1+\left\|\tilde{X}_\Omega\right\|_1\notag\\
&\stackrel{(a)}{\leq}{\frac{\epsilon}{1-3\epsilon}}\left\|\left(X-\tilde{X}\right)_{\Omega^c}\right\|+\left\|\tilde{X}_\Omega\right\|_1\notag\\
& = {\frac{\epsilon}{1-3\epsilon}}\left\|-\tilde{X}_{\Omega^c}\right\|+\left\|\tilde{X}_\Omega\right\|_1\notag\\
&<\left\|\tilde{X}\right\|_1\notag
\end{align}
where ${(a)}$ follows from Proposition~\ref{prop.nsp}, and the last line follows from the fact that $\epsilon < \frac{1}{4}$, again from Proposition~\ref{prop.nsp}. Therefore, the unique solution of \eqref{eq.optP1} is $X$ with probability exceeding $1-p^{-c}$, for some $c>0$. 
\end{proof}


\section{Conclusions}

In this paper we have introduced the notion of distributed sparsity for matrices. We have shown that when a matrix is $X$ distributed sparse, and $A,B$ are suitable random binary matrices, then it is possible to recover $X$ from under-determined linear measurements of the form $Y=AX B^T$ via $\ell_1$ minimization. We have also shown that this recovery procedure is robust in the sense that if $X$ is equal to a distributed sparse matrix plus a perturbation, then our procedure returns an approximation with accuracy proportional to the size of the perturbation. Our results follow from a new lemma about the properties of tensor products of random bipartite graphs. We also describe three interesting applications where our results would be directly applicable.

In future work, we plan to investigate the statistical behavior and sample complexity of estimating a distributed sparse matrix (and its exact support) in the presence of various sources of noise (such as additive Gaussian noise, and Wishart noise). We expect an interesting trade-off between the sketching dimension and the sample complexity. 


\bibliographystyle{plain}
\bibliography{refs}

\begin{thebibliography}{10}

\bibitem{graph_sketching}
Kook~Jin Ahn, Sudipto Guha, and Andrew McGregor.
\newblock Graph sketches: sparsification, spanners, and subgraphs.
\newblock In {\em Proceedings of the 31st symposium on Principles of Database
  Systems}, pages 5--14. ACM, 2012.

\bibitem{Andoni09efficientsketches}
Alexandr Andoni, Khanh~Do Ba, Piotr Indyk, and David Woodruff.
\newblock Efficient sketches for earth-mover distance, with applications.
\newblock In {\em in FOCS}, 2009.

\bibitem{angluin1979fast}
Dana Angluin and Leslie~G. Valiant.
\newblock Fast probabilistic algorithms for hamiltonian circuits and matchings.
\newblock {\em Journal of Computer and system Sciences}, 18(2):155--193, 1979.

\bibitem{Recht}
L.~Balzano, R.~Nowak, and B.~Recht.
\newblock Online identification and tracking of subspaces from highly
  incomplete information.
\newblock In {\em Proceedings of the 48th Annual Allerton Conference}, 2010.

\bibitem{berindeIndyk2008}
R.~Berinde, A.C. Gilbert, P.~Indyk, H.~Karloff, and M.J. Strauss.
\newblock Combining geometry and combinatorics: A unified approach to sparse
  signal recovery.
\newblock In {\em Communication, Control, and Computing, 2008 46th Annual
  Allerton Conference on}, pages 798 --805, sept. 2008.

\bibitem{bertsimas1997introduction}
Dimitris Bertsimas and John~N Tsitsiklis.
\newblock {\em Introduction to linear optimization}.
\newblock Athena Scientific Belmont, MA, 1997.

\bibitem{BickelLevina2}
P.~J. Bickel and E.~Levina.
\newblock Covariance regularization by thresholding.
\newblock {\em Annals of Statistics}, 36(6):2577--2604, 2008.

\bibitem{BickelLevina1}
P.~J. Bickel and E.~Levina.
\newblock Regularized estimation of large covariance matrices.
\newblock {\em Annals of Statistics}, 36(1):199--227, 2008.

\bibitem{bollobas2001random}
B.~Bollob{\'a}s.
\newblock {\em Random graphs}, volume~73.
\newblock Cambridge university press, 2001.

\bibitem{candes2006robust}
E.J. Cand{\`e}s, J.~Romberg, and T.~Tao.
\newblock Robust uncertainty principles: Exact signal reconstruction from
  highly incomplete frequency information.
\newblock {\em Information Theory, IEEE Transactions on}, 52(2):489--509, 2006.

\bibitem{candes2005decoding}
E.J. Candes and T.~Tao.
\newblock Decoding by linear programming.
\newblock {\em Information Theory, IEEE Transactions on}, 51(12):4203--4215,
  2005.

\bibitem{RobustPCA}
Emmanuel~J Candes, Xiaodong Li, Yi~Ma, and John Wright.
\newblock Robust principal component analysis?
\newblock {\em arXiv preprint arXiv:0912.3599}, 2009.

\bibitem{venkatCnvxGeometryArxiv2010}
V.~{Chandrasekaran}, B.~{Recht}, P.~A. {Parrilo}, and A.~S. {Willsky}.
\newblock {The Convex Geometry of Linear Inverse Problems}.
\newblock {\em ArXiv e-prints}, December 2010.

\bibitem{venkatConvexFOCM2012}
Venkat Chandrasekaran, Benjamin Recht, PabloA. Parrilo, and AlanS. Willsky.
\newblock The convex geometry of linear inverse problems.
\newblock {\em Foundations of Computational Mathematics}, 12:805--849, 2012.

\bibitem{Cohen2009}
Albert Cohen, Wolfgang Dahmen, and Ronald Devore.
\newblock {COMPRESSED SENSING AND BEST k-TERM APPROXIMATION}.
\newblock {\em Journal of the American Mathematical Society}, 22(1):211--231,
  2009.

\bibitem{cormode2006combinatorial}
Graham Cormode and S~Muthukrishnan.
\newblock Combinatorial algorithms for compressed sensing.
\newblock {\em Structural Information and Communication Complexity}, pages
  280--294, 2006.

\bibitem{diggle1998nonparametric}
Peter~J Diggle and Ar{\=u}nas~P Verbyla.
\newblock Nonparametric estimation of covariance structure in longitudinal
  data.
\newblock {\em Biometrics}, pages 401--415, 1998.

\bibitem{donoho2003optimally}
David~L Donoho and Michael Elad.
\newblock Optimally sparse representation in general (nonorthogonal)
  dictionaries via ?1 minimization.
\newblock {\em Proceedings of the National Academy of Sciences},
  100(5):2197--2202, 2003.

\bibitem{Donoho2001}
David~L Donoho and Xiaoming Huo.
\newblock {Uncertaintly Principles and Ideal Atomic Decomposition}.
\newblock {\em IEEE Transactions on Information Theory}, 47(7):2845--2862,
  2001.

\bibitem{donoho2006compressed}
David~Leigh Donoho.
\newblock Compressed sensing.
\newblock {\em Information Theory, IEEE Transactions on}, 52(4):1289--1306,
  2006.

\bibitem{duarte2012kronecker}
Marco~F Duarte and Richard~G Baraniuk.
\newblock Kronecker compressive sensing.
\newblock {\em Image Processing, IEEE Transactions on}, 21(2):494--504, 2012.

\bibitem{duarteKroneckerCS2010}
M.F. Duarte and R.G. Baraniuk.
\newblock Kronecker product matrices for compressive sensing.
\newblock In {\em Acoustics Speech and Signal Processing (ICASSP), 2010 IEEE
  International Conference on}, pages 3650 --3653, march 2010.

\bibitem{Fan_Fan_Lv_2007}
J.~Fan, Y.~Fan, and J.~Lv.
\newblock High dimensional covariance matrix estimation using a factor model.
\newblock {\em Journal of Econometrics}, 147(1):43, 2007.

\bibitem{network_compression}
Anna~C Gilbert and Kirill Levchenko.
\newblock Compressing network graphs.
\newblock In {\em Proceedings of the LinkKDD workshop at the 10th ACM
  Conference on KDD}. Citeseer, 2004.

\bibitem{gb08}
M.~Grant and S.~Boyd.
\newblock Graph implementations for nonsmooth convex programs.
\newblock In V.~Blondel, S.~Boyd, and H.~Kimura, editors, {\em Recent Advances
  in Learning and Control}, Lecture Notes in Control and Information Sciences,
  pages 95--110. Springer-Verlag Limited, 2008.

\bibitem{gribonval2003sparse}
R{\'e}mi Gribonval and Morten Nielsen.
\newblock Sparse representations in unions of bases.
\newblock {\em Information Theory, IEEE Transactions on}, 49(12):3320--3325,
  2003.

\bibitem{hajnal1970proof}
Andr{\'a}s Hajnal and Endre Szemer{\'e}di.
\newblock Proof of a conjecture of erdos.
\newblock {\em Combinatorial theory and its applications}, 2:601--623, 1970.

\bibitem{haupt2010toeplitz}
Jarvis Haupt, Waheed~U Bajwa, Gil Raz, and Robert Nowak.
\newblock Toeplitz compressed sensing matrices with applications to sparse
  channel estimation.
\newblock {\em Information Theory, IEEE Transactions on}, 56(11):5862--5875,
  2010.

\bibitem{hedges1985statistical}
Larry~V Hedges, Ingram Olkin, Mathematischer Statistiker, Ingram Olkin, and
  Ingram Olkin.
\newblock {\em Statistical methods for meta-analysis}.
\newblock Academic Press New York, 1985.

\bibitem{cvx}
CVX~Research{,} Inc.
\newblock {CVX}: Matlab software for disciplined convex programming, version
  2.0 beta, September 2012.

\bibitem{johnson1984extensions}
W.B. Johnson and J.~Lindenstrauss.
\newblock Extensions of {L}ipschitz mappings into a {H}ilbert space.
\newblock {\em Contemporary mathematics}, 26(189-206):1--1, 1984.

\bibitem{jokarKroneckerCS2010}
S.~Jokar.
\newblock Sparse recovery and kronecker products.
\newblock In {\em Information Sciences and Systems (CISS), 2010 44th Annual
  Conference on}, pages 1 --4, march 2010.

\bibitem{jokar2009sparse}
Sadegh Jokar and Volker Mehrmann.
\newblock Sparse solutions to underdetermined kronecker product systems.
\newblock {\em Linear Algebra and its Applications}, 431(12):2437--2447, 2009.

\bibitem{Kane:2011:FME:1993636.1993735}
Daniel~M. Kane, Jelani Nelson, Ely Porat, and David~P. Woodruff.
\newblock Fast moment estimation in data streams in optimal space.
\newblock In {\em Proceedings of the 43rd annual ACM symposium on Theory of
  computing}, STOC '11, pages 745--754, New York, NY, USA, 2011. ACM.

\bibitem{journals/corr/abs-0902-4045}
M.~Amin Khajehnejad, Alexandros~G. Dimakis, Weiyu Xu, and Babak Hassibi.
\newblock Sparse recovery of positive signals with minimal expansion.
\newblock {\em CoRR}, abs/0902.4045, 2009.

\bibitem{khajehnejad2011sparse}
MAmin Khajehnejad, Alexandros~G Dimakis, Weiyu Xu, and Babak Hassibi.
\newblock Sparse recovery of nonnegative signals with minimal expansion.
\newblock {\em Signal Processing, IEEE Transactions on}, 59(1):196--208, 2011.

\bibitem{Kierstead:2008:SPH:1348972.1348978}
H.~A. Kierstead and A.~V. Kostochka.
\newblock A short proof of the {H}ajnal-{S}zemeredi {T}heorem on equitable
  colouring.
\newblock {\em Comb. Probab. Comput.}, 17(2):265--270, March 2008.

\bibitem{Lauritzen_book}
S.~Lauritzen.
\newblock {\em Graphical Models}.
\newblock Clarendon Press, Oxford, 1996.

\bibitem{meinshausenBuhlmann2006}
Nicolai Meinshausen and Peter B{\"u}hlmann.
\newblock High-dimensional graphs and variable selection with the lasso.
\newblock {\em The Annals of Statistics}, 34(3):1436--1462, 2006.

\bibitem{muthukrishnan2005data}
S~Muthukrishnan.
\newblock {\em Data streams: Algorithms and applications}.
\newblock Now Publishers Inc, 2005.

\bibitem{Pemmaraju:2001:ECE:365411.365811}
Sriram~V. Pemmaraju.
\newblock Equitable colorings extend {C}hernoff-{H}oeffding bounds.
\newblock In {\em Proceedings of the twelfth annual ACM-SIAM symposium on
  Discrete algorithms}, SODA '01, pages 924--925, Philadelphia, PA, USA, 2001.
  Society for Industrial and Applied Mathematics.

\bibitem{RWRY}
P.~Ravikumar, M.~J. Wainwright, G.~Raskutti, and B.~Yu.
\newblock High-dimensional covariance estimation by minimizing
  $\ell_1$-penalized log-determinant divergence.
\newblock {\em Electronic Journal of Statistics}, 5, 2011.

\bibitem{rothman2009generalized}
Adam~J Rothman, Elizaveta Levina, and Ji~Zhu.
\newblock Generalized thresholding of large covariance matrices.
\newblock {\em Journal of the American Statistical Association},
  104(485):177--186, 2009.

\bibitem{rudelson2008sparse}
Mark Rudelson and Roman Vershynin.
\newblock On sparse reconstruction from fourier and gaussian measurements.
\newblock {\em Communications on Pure and Applied Mathematics},
  61(8):1025--1045, 2008.

\bibitem{sachs2005causal}
Karen Sachs, Omar Perez, Dana Pe'er, Douglas~A Lauffenburger, and Garry~P
  Nolan.
\newblock Causal protein-signaling networks derived from multiparameter
  single-cell data.
\newblock {\em Science Signalling}, 308(5721):523, 2005.

\bibitem{tropp2004greed}
Joel~A Tropp.
\newblock Greed is good: Algorithmic results for sparse approximation.
\newblock {\em Information Theory, IEEE Transactions on}, 50(10):2231--2242,
  2004.

\bibitem{vershynin2010introduction}
Roman Vershynin.
\newblock Introduction to the non-asymptotic analysis of random matrices.
\newblock {\em arXiv preprint arXiv:1011.3027}, 2010.

\bibitem{wainwright2009sharp}
Martin~J Wainwright.
\newblock Sharp thresholds for high-dimensional and noisy sparsity recovery
  using $\ell_1-$constrained quadratic programming (lasso).
\newblock {\em Information Theory, IEEE Transactions on}, 55(5):2183--2202,
  2009.

\end{thebibliography}


\renewcommand\appendixname{Appendix}
\begin{appendices}
\section{}
\subsection*{Proof of Lemma~\ref{lemma.main}}
\label{appendix:A}

\begin{lema}
Suppose that $G_1 = ([p],[m],E_1)$ and $G_2 = ([p],[m],E_2)$ are two independent uniformly random $\delta-$left regular bipartite graphs with  $\delta = \mathcal{O}(\log p)$ and $m=\mathcal{O}(\sqrt{dp} \log p)$. Let $\Omega\in\mathfrak{W}_{d,p}$ be fixed. Then there exists an $\epsilon \in \left( 0,\frac{1}{4} \right)$ such that $G_1\otimes G_2$ has the following properties with probability exceeding $1-p^{-c}$, for some $c>0$. 
\begin{enumerate}
\item  $|N(\Omega)| \geq p \delta^2(1-\epsilon)$.
\item For any $(i,i') \in ([p] \times [p]) \setminus \Omega $ we have $|N(i,i')\cap N(\Omega)| \leq \epsilon \delta^2$.  
\item For any $(i,i') \in \Omega$, $|N(i,i')\cap N(\Omega \setminus (i,i'))| \leq \epsilon \delta^2$.  
\end{enumerate}
Moreover, all these claims continue to hold when $G_2$ is the same as $G_1$. 
\end{lema}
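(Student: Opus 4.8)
The plan is to work throughout with the description of the base graph $G=([p],[m],E)$ as a collection of $p\delta$ i.i.d.\ uniform samples from $[m]$: for each left vertex $i$ a multiset $N(i)=\{j_1^{(i)},\dots,j_\delta^{(i)}\}$. In $G^\otimes$ the left vertex $(i,i')$ then has edge-multiset landing on $N(i)\times N(i')$ (so $\delta^2$ edges counted with multiplicity and $|N(i,i')|=|N(i)|\,|N(i')|\le\delta^2$ distinct right-neighbours), and $N(\Omega)=\bigcup_{(i,i')\in\Omega}N(i)\times N(i')$. The central bookkeeping device is collision counting: if a bin $(u,v)\in[m]^2$ receives $d_{uv}$ of the edges emanating from $\Omega$, then $\sum_{uv}d_{uv}=|\Omega|\delta^2$ while $|N(\Omega)|=|\{(u,v):d_{uv}\ge1\}|$, hence $|\Omega|\delta^2-|N(\Omega)|\le\sum_{uv}\binom{d_{uv}}{2}=:Z$, the number of unordered pairs of distinct $\Omega$-edges sharing a bin. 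All three parts will follow by estimating such collision counts and then invoking a bounded-differences inequality; the only genuinely delicate point is the choice of the ``exposed-last'' random variables in Parts 2 and 3. Throughout, write $m=C_m\sqrt{dp}\log p$ and $\delta=C_\delta\log p$, so that $dp\delta^2/m^2\le C_\delta^2/C_m^2$ and $\delta^2/m=p^{-1/2+o(1)}$, and keep $d$ at most polylogarithmic.

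For Part 1 I would bound $\mathbb{E}[Z]$ by a short case analysis on how many of the four left-endpoints of a colliding edge-pair coincide. The dominant contribution comes from pairs of edges belonging to two distinct elements of $\Omega$ with distinct left-endpoints, each colliding with probability $m^{-2}$, giving $\le\binom{|\Omega|}{2}\delta^4 m^{-2}$; the remaining terms (two edges inside one $\Omega$-element, or edges sharing a left-endpoint) are of lower order $O(d|\Omega|\delta^3/m)$, which is where $d$-distributedness of $\Omega$ (at most $d$ elements per row and per column) is used. Since $|\Omega|\le dp$, these estimates yield $\mathbb{E}[Z]\le\tfrac{\epsilon}{2}p\delta^2$ once $C_m$ is large relative to $C_\delta$ and $1/\sqrt\epsilon$, and combined with $|\Omega|\delta^2\ge p\delta^2$ this gives $\mathbb{E}|N(\Omega)|\ge p\delta^2(1-\epsilon/2)$. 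Finally $|N(\Omega)|$, as a function of the $p\delta$ samples, has bounded differences at most $2d\delta$ (one sample $j^{(i)}_a$ touches $\le 2d$ elements of $\Omega$, each contributing $\le\delta$ to $|N(\Omega)|$), so a one-sided McDiarmid bound with deviation $\tfrac{\epsilon}{2}p\delta^2$ fails with probability $\exp\!\big(-\Omega(\epsilon^2 p\delta/d^2)\big)=\exp\!\big(-\Omega(p\log p/\mathrm{polylog}\,p)\big)\ll p^{-c}$; no union bound is needed here.

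Parts 2 and 3 are the heart of the matter and I expect the main obstacle there. The quantity $|N(i,i')\cap N(\Omega)|$ (resp.\ $|N(i,i')\cap N(\Omega\setminus(i,i'))|$) is at most $\delta^2=\Theta(\log^2 p)$, yet it a priori depends on all $\Theta(p\log p)$ samples, so a direct bounded-differences argument over all of them gives a useless deviation scale $\sqrt p\cdot\mathrm{polylog}$. The fix is to expose the right variables last. Fix $(i,i')$ and split $\Omega=\Omega_g\sqcup\Omega_b$, where $\Omega_b$ is the set of $\le4d$ elements of $\Omega$ lying in row $i$, column $i$, row $i'$ or column $i'$ and $\Omega_g$ is the rest (in Part 3 do the same with $\Omega\setminus(i,i')$). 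Condition on all samples of left vertices other than $i,i'$; this fixes $W:=N(\Omega_g)$, with $|W|\le dp\delta^2$, and makes $W$ independent of $N(i),N(i')$. Then $|N(i,i')\cap W|\le\#\{(a,b)\in[\delta]^2:(j^{(i)}_a,j^{(i')}_b)\in W\}=:T$ is a function of only the $2\delta$ remaining samples, with conditional mean $\delta^2|W|/m^2\le\delta^2 C_\delta^2/C_m^2\le\tfrac{\epsilon}{4}\delta^2$ (for $C_m$ large) and bounded differences $\le2\delta$; McDiarmid over these $2\delta$ variables gives $\Pr[T>\tfrac{\epsilon}{2}\delta^2\mid W]\le\exp(-\Omega(\epsilon^2\delta))=p^{-\Omega(\epsilon^2 C_\delta)}$, which survives a union bound over the $\le p^2$ choices of $(i,i')$ (resp.\ $\le dp$ in Part 3) once $C_\delta$ is large. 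For the ``bad'' part, each of the $\le4d$ terms $N(i,i')\cap(N(\tilde i)\times N(\tilde i'))$ is at most $\delta$ times a pairwise intersection $|N(x)\cap N(y)|$, or a product of two such, with $x\ne y$ and $N(x),N(y)$ independent; a crude tail bound shows $|N(x)\cap N(y)|$ (mean $\le\delta^2/m=p^{-1/2+o(1)}$) exceeds a suitable constant $C''$ with probability only $p^{-\Omega(C'')}$, so after a union bound the bad part contributes at most $4d\,\delta C''\le\tfrac{\epsilon}{2}\delta^2$ to the count (using $d=O(\log p)$, $\delta=\Theta(\log p)$). Adding the good and bad bounds gives Parts 2 and 3; the value $\epsilon\in(0,1/4)$ is fixed first (e.g.\ $\epsilon=1/8$) and the constants $C_m,C_\delta,C''$ are then chosen to satisfy the finitely many constraints collected above, yielding some $c>0$.

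Finally, the case $G_2=G_1$ is exactly the one carried out above, since everything was phrased in terms of the single base graph $G$; when $G_1$ and $G_2$ are independent the same argument goes through with strictly more independence available (the two coordinates never share a left vertex), which only improves the estimates. To reiterate, the main obstacle is the concentration in Parts 2 and 3: one must notice that after peeling off the $O(d)$ elements of $\Omega$ that share a row or column with $(i,i')$, the intersection size becomes a function of merely $O(\log p)$ random variables, which is what makes a $p^{-\Omega(\log p)}$-type tail — strong enough to survive the union over all $(i,i')$ — available at all.
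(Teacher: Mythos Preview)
Your proof is correct and takes a genuinely different route from the paper's. The paper's concentration tool throughout is the Hajnal--Szemer\'edi equitable-coloring theorem: each sum of weakly dependent indicators is partitioned into $\Theta(\cdot)$ color classes of mutually independent terms, and Chernoff is applied per class. You instead apply McDiarmid's bounded-differences inequality directly, which is more elementary and avoids the coloring machinery altogether. In Part~1 the paper lower-bounds $|N(\Omega)|$ by $|N(\mathcal D)|$ for the diagonal $\mathcal D=\{(1,1),\dots,(p,p)\}\subset\Omega$ only (distinct diagonal elements have disjoint base-graph left vertices, giving extra independence); you instead run a collision count over all of $\Omega$ and use $|\Omega|\ge p$. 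Both arguments arrive at the same conclusion. The more substantive difference is in Parts~2 and~3: the paper conditions on the event $\mathcal E_1$ and asserts that each of the $\delta^2$ edges from $(i,i')$ lands in $N(\Omega)$ with probability $|N(\Omega)|/m^2$, then applies Hajnal--Szemer\'edi\,+\,Chernoff to the $\delta^2$ indicators. This step is a bit loose in the $G_1=G_2$ case, since $(i,i)\in\Omega$ forces $N(\Omega)$ to depend on $N(i)$. Your good/bad split---peeling off the $\le 4d$ elements of $\Omega$ that share a row or column with $(i,i')$ so that $W=N(\Omega_g)$ becomes genuinely independent of $N(i),N(i')$, and handling the $\Omega_b$ contribution via the binomial tail of $|N(x)\cap N(y)|$---accounts for this dependence explicitly. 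What the paper's approach buys is brevity (no case analysis on $\Omega_b$); what yours buys is that Hajnal--Szemer\'edi is not needed and the dependence between $N(i,i')$ and $N(\Omega)$ is handled rigorously.
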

\begin{proof}
As stated earlier, we will only prove this lemma for the case when $G_1=G_2$. With a few minor modifications, one can readily get a proof for the easier case when $G_1$ and $G_2$ are drawn independently. 

Let $\mathcal{E}_1, \mathcal{E}_2$ and, $\mathcal{E}_3$ respectively denote the events that the implications (1), (2) and, (3) are true. Notice that 
\begin{align*}
\mathbb{P}\left(\mathcal{E}_1^c\cup \mathcal{E}_2^c\cup \mathcal{E}_3^c\right)&\leq \mathbb{P}(\mathcal{E}_1^c)+\mathbb{P}(\mathcal{E}_2^c)+\mathbb{P}(\mathcal{E}_3^c) \\
& = \mathbb{P}(\mathcal{E}_1^c) + \mathbb{P}(\mathcal{E}_2^c\mid \mathcal{E}_1)\mathbb{P}(\mathcal{E}_1) + \mathbb{P}(\mathcal{E}_2^c\mid \mathcal{E}_1^c)\mathbb{P}(\mathcal{E}_1^c)\\&\qquad\qquad\hspace{-1mm} + \mathbb{P}(\mathcal{E}_3^c\mid \mathcal{E}_1)\mathbb{P}(\mathcal{E}_1) + \mathbb{P}(\mathcal{E}_3^c\mid \mathcal{E}_1^c)\mathbb{P}(\mathcal{E}_1^c)
\\
&\leq 3\mathbb{P}(\mathcal{E}_1^c) + \mathbb{P}(\mathcal{E}_2^c\mid \mathcal{E}_1) + \mathbb{P}(\mathcal{E}_3^c\mid \mathcal{E}_1)
\end{align*}

Our strategy will be to upper bound $\mathbb{P}(\mathcal{E}_1^c), \mathbb{P}(\mathcal{E}_2^c\mid \mathcal{E}_1)$ and, $\mathbb{P}(\mathcal{E}_3^c\mid \mathcal{E}_1)$. Suppose the bounds were $p_1,p_2$ and, $p_3$ respectively, then it is easy to see that 
\begin{equation}
\mathbb{P}\left\{\mathcal{E}_1\cap \mathcal{E}_2\cap \mathcal{E}_3\right\}\geq 1-\max\{3p_1,p_2,p_3\}.\label{eq.Lemma1}
\end{equation}

\noindent\underline{Part 1}. We will first show that $\mathbb{P}(\mathcal{E}_1^c)$ is small. Since $\Omega$ is $d-$distributed, the ``diagonal'' set $\mathcal{D} := \left\{(1,1),\ldots, (p,p)\right\}$ is a subset of $\Omega$. Now, notice that for $j\neq j'\in [m], i\in [p]$, 
\begin{equation}
\mathbb{P}\left[(j,j')\in N((i,i))\right]
= \frac{\delta(\delta-1)}{m(m-1)}
\end{equation}
This implies that 
$$\mathbb{P}\left[(j,j')\notin N(\mathcal{D})\right]
 = \left(1-\frac{\delta(\delta-1)}{m(m-1)}\right)^{\left|\mathcal{D}\right|}
$$
Therefore, we can bound the expected value of $\left|N(\Omega)\right|$ as follows. 
\begin{align*}
\mathbb{E}\big[\left|N(\Omega)\right|\big]
&\geq \mathbb{E}\big[\left|N\left(\mathcal{D}\right)\right|\big]\\
& = \sum_{jj'\in[m]\times [m]}\mathbb{P}\left[(j,j')\in N(\mathcal{D})\right]\\
&\geq\sum_{\substack{jj'\in[m]\times[m],\\ j\neq j'}}\mathbb{P}\left[(j,j')\in N(\mathcal{D})\right]\\
&= \sum_{\substack{jj'\in[m]\times[m],\\ j\neq j'}} \left(1 - \left(1-\frac{\delta(\delta-1)}{m(m-1)}\right)^{\left|\mathcal{D}\right|}\right)\\
&= m(m-1) \left(1 - \left(1-\frac{\delta(\delta-1)}{m(m-1)}\right)^{\left|\mathcal{D}\right|}\right)\\
&\geq m(m-1) \left(\frac{\left|\mathcal{D}\right|\delta(\delta-1)}{m(m-1)}-\frac{\left|\mathcal{D}\right|^2\delta^2(\delta-1)^2}{m^2(m-1)^2}\right)\\
& = \left|\mathcal{D}\right|\delta^2\left(1 - \left(\frac{1}{\delta}+\frac{(\delta-1)^2\left|\mathcal{D}\right|}{m(m-1)}\right)\right)\\
& = p\delta^2\left(1-\epsilon'\right).
\end{align*}
Where in the last step, we set $\epsilon' = \frac{1}{\delta}+\frac{(\delta-1)^2\left|\mathcal{D}\right|}{m(m-1)}$.

To complete the proof, we must show that the random quantity $\left|N(\Omega)\right|$ cannot be much smaller than $p\delta^2(1-\epsilon')$. As a first step, we define the random variables $\chi_{jj'} :=\mathbf{1}_{\left\{(j,j')\in N(\mathcal{D})\right\}}$ and notice that the following chain of inequalities hold
$$\left|N(\Omega)\right|\geq\left|N(\mathcal{D})\right|\geq\sum_{\substack{jj'\in[m]\times [m]\\j\neq j'}}\chi_{jj'}.$$
Therefore, we have that 
$$\mathbb{P}\left[\left|N(\Omega)\right|<p\delta^2(1-\epsilon'-\epsilon'')\right]\leq\mathbb{P}\left[\sum_{\substack{jj'\in[m]\times [m]\\j\neq j'}}\chi_{jj'}<p\delta^2(1-\epsilon'-\epsilon'')\right].$$
Also, since by above, $\mathbb{E}\left[\sum_{j\neq j'}\chi_{jj'}\right]\geq p\delta^2(1-\epsilon')$, we have that 
$$\mathbb{P}\left[\left|N(\Omega)\right|<p\delta^2(1-\epsilon)\right]\leq\mathbb{P}\left[\sum_{\substack{jj'\in[m]\times [m]\\j\neq j'}}\chi_{jj'}<\mathbb{E}\left[\sum_{\substack{jj'\in[m]\times [m]\\j\neq j'}}\chi_{jj'}\right]-p\delta^2\epsilon''\right].$$

Now, notice that the sum $\sum_{j\neq j'}\chi_{jj'}$ has $m(m-1)$ terms and each term in the sum is dependent on no more than $2m-4$ terms. Therefore, one way to bound the required quantity is to extract independent sub-sums from the above sum and bound the deviation of each of those from their means (which is the corresponding sub-sum of the mean). A principled way of doing this is suggested by the celebrated Hajnal-Szemeridi theorem \cite{hajnal1970proof,Kierstead:2008:SPH:1348972.1348978}. Consider a graph on the vertex set $[m]\times [m]\setminus\{(1,1),\ldots,(m,m)\}$ where there is an edge between vertices $(j,j')$ and $(j_1,j_1')$ if $j=j_1$ and/or $j'=j_1'$, i.e., exactly when the random variables $\chi_{jj'}$ and $\chi_{j_1j_1'}$ are dependent. Since this graph has degree $\Theta(m)$, Hajnal-Szemeridi theorem tells us that this graph can be equitable colored with $\Theta(m)$ colors. In other words, the above sum can be partitioned into $\Theta(m)$ sub-sums such that each sub-sum has $\Theta(m)$ elements and the random variables in each of them are independent. Along with this and the fact that $m(m-1)>p\delta^2$, we can use the union bound and write
\begin{align*} \small
\mathbb{P}\left[\sum_{\substack{jj'\in[m]\times [m]\\j\neq j'}}\chi_{jj'}<\mathbb{E}\left[\sum_{\substack{jj'\in[m]\times [m]\\j\neq j'}}\chi_{jj'}\right]-p\delta^2\epsilon''\right]& \\
& \hspace {-48mm}\leq \Theta(m)\; \mathbb{P}\left[\frac{1}{\left|C_1\right|}\sum_{jj'\in C_1}\chi_{jj'}<\mathbb{E}\left[\frac{1}{\left|C_1\right|}\sum_{jj'\in C_1}\chi_{jj'}\right]-\epsilon''\right],
\end{align*} \normalsize
where $C_1$ is one of the ``colors''. Notice that $\left|C_1\right|=\Theta(m)$. 

Finally, using Chernoff bounds, we have
\begin{align*}
&\Theta(m)\;\mathbb{P}\left[\frac{1}{\left|C_1\right|}\sum_{jj'\in C_1}\chi_{jj'}<\frac{1}{\left|C_1\right|}\mathbb{E}\left[\sum_{jj'\in C_1}\chi_{jj'}\right]-\epsilon''\right]\\&\hspace{85.5mm}\leq \Theta(m)\; \exp\left\{-2\epsilon''^2\Theta(m)\right\}.
\end{align*}
 Finally, since $\epsilon'$ can be made as small as possible, setting $\epsilon :=\epsilon'+\epsilon''$ yeilds $p_1 < p^{-c_1}$ for some $c_1 > 0$. This  technique of generating large deviation bounds when one has limited dependence is not new, see \cite{Pemmaraju:2001:ECE:365411.365811}.

\noindent \underline{Part 2:} Now, we bound $\mathbb{P}(\mathcal{E}_2^c\mid \mathcal{E}_1)$. Associated to a fixed  $i$ one can imagine $\delta$ independent random trials that determine the outgoing edges from $i$. In a similar way there are $\delta$ independent random trials associated to the outgoing edges of $i'$. Let us fix $(i,i')$ and investigate the outgoing edge (in the tensor graph) determined by the first trial of $i$ and the first trial of $i'$. The probability that this edge emanating from the vertex $(i,i')\in[p]^2\setminus\{(1,1),\ldots,(p,p)\}$ hits an arbitrary vertex $(j,j')\in [m]^2$ is given by $1/m^2$. The probability that this edge lands in $N(\Omega)$ is, therefore, given by $\left|N(\Omega)\right|/m^2$. Since there are $\delta^2$ edges that are incident on $(i,i')$, the expected size of overlap between $N(i,i')$ and $N(\Omega)$ is upper bounded by $$\delta^2\frac{\left|N(\Omega)\right|}{m^2}.$$

Again, to show concentration, we employ similar arguments as before and define indicator random variables $\chi_1\ldots, \chi_{\delta^2}$ each of which corresponds to one of the edges emanating from the vertex $(i,i')$ and then observing that the sum $\sum_{k=1}^{\delta^2}\chi_k$ is precisely equal to the random quantity $\left|N(i,i')\cap N(\Omega)\right|$. To conclude that this random quantity concentrates, we first observe that, as above, the $\delta^2$ dependent terms can be divided up into $\Theta(\delta)$ with $\Theta(\delta)$ elements each such that in each set the terms are independent. Therefore, we have 
\begin{align*}
\mathbb{P}\left[\left|N(i,i')\cap N(\Omega)\right|>\delta^2\frac{\left|N(\Omega)\right|}{m^2}(1+\epsilon')\right]
&= \mathbb{P}\left[\sum_{k=1}^{\delta^2}\chi_k>\delta^2\frac{\left|N(\Omega)\right|}{m^2}(1+\epsilon')\right]\\
&\leq \Theta(\delta)\;\mathbb{P}\left[\sum_{k\in C_1}\chi_k > \Theta(\delta)\frac{\left|N(\Omega)\right|}{m^2}(1+\epsilon')\right]\\
&\leq \Theta(\delta)\exp\left\{-\delta\frac{\left|N(\Omega)\right|}{m^2}\epsilon'\right\},
\end{align*}
where $C_1$ is one of the colors. 

Therefore, since conditioned on $\mathcal{E}_1$, $|N(\Omega)|>\delta^2 p (1-\epsilon)$ if we pick $m= \delta \sqrt{dp}$, and $\delta=\Theta(\log p)$ there is a $c_2'=c_2'(\epsilon')>2$ such that,  $\left|N(i,i')\cap N(\Omega)\right|>\delta^2\frac{\left|N(\Omega)\right|}{m^2}(1+\epsilon')$, with probability not exceeding $p^{-c_2'}$. Setting $\epsilon = (1+\epsilon')\left|N(\Omega)\right|/m^2$, picking $m$ as prescribed, and taking union bound over $(i,i') \in \Omega^c$, we get $p_2 \leq p^{-c_2}$ for some $c_2>0$. 

\noindent\underline{Part 3:} 
Next, we bound $\mathbb{P}(\mathcal{E}_3^c\mid \mathcal{E}_1)$. Notice that the proof is very similar to that of part $2$ when $i\neq i'$. So, here we will consider the quantity $\left|N(i,i)\cap N(\Omega\setminus\{(i,i)\})\right|$.

As explained above to each left node $i$ we associate $\delta$ random trials that determine its outgoing edges. Correspondingly, if we fix a left node $(i,i)$ in the tensor graph, and think of its outgoing edges they are determined by  the outcome of $\delta^2$ product trials. Let $\ind_k(j)$ be the indicator function of the event that in the $k^{th}$ trial of $i$ the outgoing edge is incident on $j$. The probability that the edge associated to the $(k,l)$ trial associated to $(i,i)$ is incident on $(j,j')$ is the random variable $\ind_k(j) \ind_l(j')$. Note that $\ind_k(j) \ind_k(j')= \ind_k(j)$ if $j=j'$ and $0$ otherwise.

Note that 
\begin{align*}
|N(i,i) \cap N(\Omega \setminus (i,i))| &= \sum_{k=1}^{\delta} \sum_{l=1}^{\delta} \sum_{(j,j')\in N(\Omega \setminus (i,i))} \ind_k(j) \ind_l(j') \\
& \leq \delta + \sum_{k \neq l} \sum_{(j,j')\in N(\Omega \setminus (i,i))} \ind_k(j) \ind_l(j')
\end{align*}
When $k \neq l$, the trials corresponding to  $\ind_k(j), \ind_l(j')$ are independent, and hence $\mathbb{E}\ind_k(j) \ind_l(j') = \frac{1}{m^2}.$ We define $\chi_{k,l}:=\sum_{(j,j')\in N(\Omega \setminus (i,i))} \ind_k(j) \ind_l(j')$ and note that $\mathbb{E} \left( \chi_{k,l} \right) = \frac{N(\Omega \setminus (i,i))}{m^2}$. We also note that $\chi_{k,l}$ is binary valued, and
$$|N(i,i) \cap N(\Omega \setminus (i,i))| \leq \delta+ \sum_{k \neq l} \chi_{k,l}.$$

Therefore, 
\begin{align*}\mathbb{E} \left( |N(i,i) \cap N(\Omega \setminus (i,i))|\right) &\leq \delta+ (\delta^2-\delta)\frac{N(\Omega \setminus (i,i))}{m^2} \\
& \leq \delta + (\delta^2-\delta)\frac{\delta^2 dp}{m^2} \\
& \leq \delta^2 \left(\frac{1}{\delta} + \left(1 - \frac{1}{\delta}\right) \frac{\delta^2 dp}{m^2} \right) \\
& \leq  \delta^2 \epsilon.
\end{align*}

Next we need to prove that the quantity of interest $\sum_{k \neq l} \chi_{k,l}$ concentrates about its mean. To that end we note that these binary valued variables are such that any particular $\chi_{k,l}$ is dependent on at most $2 \delta -2$ other variables. Using Chernoff concentration bounds in conjunction with the Hajnal-Szemeredi based coloring argument explained in part 1 of this proof, followed by a union bound over $i \in [p]$ we obtain the required probability bounds $p_3 < p^{-c_3}$ for some $c_3>0$.

Substituting the bounds for $p_1, p_2, p_3$ back into \eqref{eq.Lemma1} concludes the proof.
\end{proof}


\section{}
\subsection*{Proof of Proposition~\ref{prop.random_sparsity}} 
\label{appendix.B}
\begin{prop1} 
Consider a random matrix $X \in\mathbb{R}^{p \times p}$ such that $X_{ij}\stackrel{iid}{\sim} {\rm Ber}(\gamma)$ where $p\gamma = \Delta= \Theta(1)$, then for any $\epsilon>0$, $X$ is $d-$distributed sparse with probability at least $1-\epsilon$, where $$d = \Delta\left(1 + {\frac{2\log(2p/\epsilon)}{\Delta}}\right).$$
\end{prop1}

\begin{proof}
Let $X_i, i=1,\ldots,p$ denote the sparsity of the $i-$th column and let $X_i, i = p+1,\ldots, 2p$, denote the sparsity of the $i-$th row. Notice that the $2p$ random variables $X_1,X_2,\ldots,X_{2p}$ are (dependent) Bin$(p,\gamma)$ random variables. With the choice of $d$ as indicated in the theorem, we have the following, 
\begin{align*}
\mathbb{P}(X_1 > d) & = \mathbb{P}\left(X_1 > \Delta\left(1 +{\frac{\log (2p/\epsilon)}{\Delta}} \right)\right)\\
&\stackrel{(a)}{\leq} \exp\left\{-\frac{\beta^2 \Delta}{2+\beta}\right\},\;\;\;\; \beta = \frac{2\log(2p/\epsilon)}{\Delta}\\
&\stackrel{(b)}{\leq} \exp\left\{-\frac{\beta \Delta}{2}\right\}\\
& = \frac{\epsilon}{2p}
\end{align*}
where $(a)$ follows from the multiplicative form of the Chernoff Bound \cite{angluin1979fast} and $(b)$ follows as long as $\beta > 2$. The rest of the proof follows from a simple application of the union bound.  
\end{proof}


\section{}
\label{appendix:C}
\subsection*{Proof of Theorem~\ref{thm.approximation}}
\setcounter{theorem}{1}

\begin{theorem}
\label{thm.approximation}
Suppose that $X$ is a $p\times p$ matrix. Furthermore, suppose that the hypotheses of Theorem~\ref{thm.main} hold and let $X^\ast$ be the solution to the optimization program~\eqref{eq.optP1}. Then, there exists a $c>0$ and an $\epsilon\in(0,1/4)$ such that the following holds with probability exceeding $1-p^{-c}$. 
\begin{equation}
\left\|X^\ast - X\right\|_1 \leq \frac{2 - 4\epsilon}{1-4\epsilon}\left(\min_{\Omega\in\mathfrak{W}_{d,p}} \left\|X - X_\Omega\right\|_1\right).
\end{equation}
\end{theorem}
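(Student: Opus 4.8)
The plan is to mimic the standard robust-recovery argument from compressed sensing (as in Berinde et al.), using the Nullspace Property of Proposition~\ref{prop.nsp} as the engine. Let $\Omega^\star \in \mathfrak{W}_{d,p}$ be a minimizer of $\|X - X_\Omega\|_1$, so that $\|X - X_{\Omega^\star}\|_1$ is the quantity appearing on the right-hand side; write $R := X - X_{\Omega^\star}$ for the ``tail'' of $X$ outside its best $d$-distributed support. Set $V := X^\ast - X$, the recovery error. Since $X^\ast$ is feasible for \eqref{eq.optP1} and $A X A^T = Y$, we have $A V A^T = 0$, so Proposition~\ref{prop.nsp} applies to $V$ with the set $\Omega^\star$: $\|V_{\Omega^\star}\|_1 \le \frac{\epsilon}{1-3\epsilon}\|V_{(\Omega^\star)^c}\|_1$.

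First I would exploit optimality: because $X^\ast$ minimizes the $\ell_1$ norm subject to the constraint and $X$ is feasible, $\|X^\ast\|_1 \le \|X\|_1$. I would then split both sides along $\Omega^\star$ and $(\Omega^\star)^c$ and use the triangle inequality in the familiar way:
\begin{align*}
\|X\|_1 \ge \|X^\ast\|_1 &= \|X_{\Omega^\star} + V_{\Omega^\star}\|_1 + \|X_{(\Omega^\star)^c} + V_{(\Omega^\star)^c}\|_1\\
&\ge \|X_{\Omega^\star}\|_1 - \|V_{\Omega^\star}\|_1 + \|V_{(\Omega^\star)^c}\|_1 - \|X_{(\Omega^\star)^c}\|_1.
\end{align*}
Since $\|X\|_1 = \|X_{\Omega^\star}\|_1 + \|X_{(\Omega^\star)^c}\|_1$ and $\|X_{(\Omega^\star)^c}\|_1 = \|R\|_1$, rearranging gives $\|V_{(\Omega^\star)^c}\|_1 \le 2\|R\|_1 + \|V_{\Omega^\star}\|_1$. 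Feeding in the nullspace bound $\|V_{\Omega^\star}\|_1 \le \frac{\epsilon}{1-3\epsilon}\|V_{(\Omega^\star)^c}\|_1$ and solving the resulting linear inequality for $\|V_{(\Omega^\star)^c}\|_1$ yields $\|V_{(\Omega^\star)^c}\|_1 \le \frac{2(1-3\epsilon)}{1-4\epsilon}\|R\|_1$ (here $\epsilon < 1/4$ keeps the denominator positive). Then $\|V\|_1 = \|V_{\Omega^\star}\|_1 + \|V_{(\Omega^\star)^c}\|_1 \le \left(1 + \frac{\epsilon}{1-3\epsilon}\right)\|V_{(\Omega^\star)^c}\|_1 = \frac{1-2\epsilon}{1-3\epsilon}\|V_{(\Omega^\star)^c}\|_1$, and multiplying the two factors gives $\|V\|_1 \le \frac{2(1-2\epsilon)}{1-4\epsilon}\|R\|_1 = \frac{2-4\epsilon}{1-4\epsilon}\|R\|_1$, which is exactly the claimed bound.

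The only subtlety — and the main thing to get right rather than a genuine obstacle — is that Proposition~\ref{prop.nsp} is stated for a \emph{fixed} $\Omega \in \mathfrak{W}_{d,p}$, whereas $\Omega^\star$ depends on $X$. This is fine as long as $X$ (hence $\Omega^\star$) is fixed before $A$ is drawn, which is the setting here: the probability $1 - p^{-c}$ is over the draw of $A$, and the hypotheses of Theorem~\ref{thm.main} are assumed, so Proposition~\ref{prop.nsp} holds for that particular $\Omega^\star$ with the stated probability. (If one instead wanted a uniform-over-$X$ statement one would need a union bound over all of $\mathfrak{W}_{d,p}$, but that is not what is being claimed.) Everything else is bookkeeping: track the constants carefully through the two linear-inequality manipulations to confirm the factor simplifies to $\frac{2-4\epsilon}{1-4\epsilon}$, and note the symmetry/vectorization conventions from Section~\ref{sec.preliminaries} so that all the $\ell_1$ norms are literally $\ell_1$ norms of the vectorized matrices.
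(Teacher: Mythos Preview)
Your proposal is correct and follows essentially the same route as the paper: both start from the optimality inequality $\|X^\ast\|_1\le\|X\|_1$, split on the optimal $\Omega^\star$, apply the triangle inequality, and invoke Proposition~\ref{prop.nsp} to control $\|V_{\Omega^\star}\|_1$, arriving at the identical constant $\frac{2-4\epsilon}{1-4\epsilon}$. The only cosmetic difference is bookkeeping---the paper first rewrites the nullspace bound as $\|V_{\Omega^\star}\|_1\le\frac{\epsilon}{1-2\epsilon}\|V\|_1$ and substitutes directly, whereas you solve for $\|V_{(\Omega^\star)^c}\|_1$ first and then add back $\|V_{\Omega^\star}\|_1$; your remark about $\Omega^\star$ being fixed before $A$ is drawn is a welcome clarification that the paper leaves implicit.
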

\begin{proof}
Since $X^\ast$ is the optimum of the optimization program~\eqref{eq.optP1}, we have that $\left\|X\right\|_1\geq \left\|X^\ast\right\|_1$. Let $\Omega^\ast$ be such that $\|X-X_{\Omega^\ast}\|_1 = \min_{\Omega\in\mathfrak{W}_{d,p}}\|X-X_X\|$. We can proceed as follows
\begin{align}
\left\|X\right\|_1 & \geq \left\|X^\ast\right\|_1\\
& = \left\|(X + X^\ast - X)_\Omega\right\|_1 + \left\|(X + X^\ast - X)_{\Omega^c}\right\|_1\\
& \geq \left\|X_\Omega\right\|_1 - \left\|(X^\ast - X)_\Omega\right\|_1 + \left\|(X^\ast - X)_{\Omega^c}\right\|_1 - \left\|X_{\Omega^c}\right\|_1\\
& = \left\|X\right\|_1 - 2\left\|X_{\Omega^c}\right\|_1 + \left\|X^\ast - X\right\|_1 - 2\left\|(X-X^\ast)_\Omega\right\|_1\\
& \geq \left\|X\right\|_1 - 2\left\|X_{\Omega^c}\right\|_1 + \left(1 - \frac{2\epsilon}{1-2\epsilon}\right)\left\|X^\ast - X\right\|_1
\end{align}
where in the last step, we have used the fact that since $X^{\ast}$ is a feasible point in \eqref{eq.optP1}, $AX^{\ast} B^T = AX B^T$ and therefore, we can apply the result of Proposition~\ref{prop.nsp} to $X^{\ast} - X$. This completes the proof. 
\end{proof}
\end{appendices}

\end{document}